\let\phi=\varphi
\theoremstyle{definition}
\newtheorem{definition}{Definition}
\theoremstyle{plain}
\newtheorem{theorem}{Theorem}
\newtheorem{lemma}{Lemma}
\newenvironment{shortenum}{\begin{enumerate}
    \itemsep=0pt\parsep=0pt\parskip=0pt}{\end{enumerate}}
\def\qtext#1{\quad \text{#1} \quad}
\def\qqtext#1{\qquad \text{#1} \qquad}
\def\N{\mathbb N}
\def\Z{\mathbb Z}
\def\R{\mathbb R}
\def\P{\mathbf P}
\def\E{\mathbf E}
\def\set#1{\{\, #1 \,\}}
\def\abs#1{\mathopen|#1\mathclose|}
\def\incgraph{\includegraphics[scale=0.5]}
\def\incgraphc#1{\hfill\incgraph{#1}\hfill} 
\def\Conf{\mathrm{Conf}}
\let\from=\leftarrow
\begin{document}

\title{Density Classification Quality of the Traffic-majority Rules}%
\author{Markus Redeker}%
\institute{International Centre of Unconventional Computing,
  University of the West of England, Bristol, United
  Kingdom.\email{markus2.redeker@live.uwe.ac.uk}}
\maketitle

\begin{abstract}
  The density classification task is a famous problem in the theory of
  cellular automata. It is unsolvable for deterministic automata, but
  recently solutions for stochastic cellular automata have been found.
  One of them is a set of stochastic transition rules depending on a
  parameter $\eta$, the traffic-majority rules.

  Here I derive a simplified model for these cellular automata. It is
  valid for a subset of the initial configurations and uses random
  walks and generating functions. I compare its prediction with
  computer simulations and show that it expresses recognition quality
  and time correctly for a large range of $\eta$ values.
\end{abstract}

\section{Introduction}

This is an analysis of a specific solution by Fatès \cite{Fat`es2011a}
of the so-called \emph{density classification task} for cellular
automata.

This task is an example of a synchronisation problem. In it, several
independent agents (the \emph{cells}) must reach a common state, that
one which was initially that of the majority of them. The cells have
finite memory and restricted knowledge of each other's states, so that
none of them knows the whole system. Fatès published a solution that
relies on a randomised algorithm; while he proved that the algorithm
is successful with high probability, he provided only rough bounds for
the probability of success and the time it takes.

This article analyses the algorithm in detail for a subset of initial
conditions and provides improved bounds. It is an exploration of the
use of random walks and generating functions for the understanding of
cellular automata. The analysis turns out to be quite complex. I will
therefore only consider a subset of all possible initial
configurations and will use simplifications in the stochastic model.
We will however see that the simplifications have no great influence:
the model is valid in more cases than one would expect according to
its derivation.

I now will describe cellular automata and the density classification
task in detail and then proceed to the analysis of the solution.

\subsection{Cellular Automata}

Cellular Automata are discrete dynamical systems that are
characterised by local interactions. The following semi-formal
definition is intended to clarify the nomenclature used here.
\begin{quote}
  A \emph{cellular automaton} consists of a grid on which \emph{cells}
  are located. The grid is usually $\Z$ or $\Z^2$, sometimes higher
  dimensional. It may also be cyclic, like $\Z / n\Z$. Every cell has
  a \emph{state}, which belongs to a finite set that is the same for
  all cells. The system of all cells at a time is a
  \emph{configuration}.

  Time runs in discrete steps. At every time step a \emph{transition
    rule} is applied to every cell, and it determines the state of
  that cell one time step later. All cells follow the same transition
  rule, and the state of each cell depends only on the state of the
  cell itself and that of a finite number of its neighbours at the
  previous time step. The locations of these neighbours relatively to
  the cell is also equal for all cells.

  The transition rule can be deterministic or stochastic. In the
  latter case the rule determines for each cell and each possible
  state the probability that at the next time step the cell is in this
  state. For different cells these probabilities are stochastically
  independent.
\end{quote}

A special class of deterministic automata are the \emph{Elementary
  Cellular Automata} (ECA), popularised by Stephen Wolfram. They are
the cellular automata with radius 1 and two states and usually
designated by a number in Wolfram's numbering scheme
\cite{Wolfram1983}.

\subsection{The Density Classification Task}

Density Classification is a famous problem in the theory of cellular
automata. In it, a one-dimensional cellular automaton, with $\Z / n\Z$
as grid and with two cell states, must evolve to a configuration with
all cells in state 1 if the fraction of cells that are initially in
state 1 is higher than a given \emph{critical density} $\rho$,
otherwise to the state where all cells are in state 0. In most cases
the critical density is~$\frac12$.

Some ring sizes allow configurations in which the fraction of cells in
state~1 is exactly $\rho$; they are usually excluded from
consideration. In case of $\rho = \frac12$ this means that $n$ must be
odd.

The density classification task is important because it is a difficult
problem for cellular automata. In a cellular automaton, no cell knows
the system as a whole, and still they must cooperate according to
information that none of them has. Another difficulty is that no cell
in $\Z / n\Z$ is distinguished and could play the role of an
organiser.

In fact, the density classification task is unsolvable for
deterministic cellular automata, but that was found out only years
after the problem was posed.

There are however solution for stochastic cellular automata. One of
them is a parametric family of transition rules introduced by Fatès
\cite{Fat`es2011a} that is analysed here.

\subsection{What is a Solution?}

Before we proceed, we need to clarify what exactly a solution of the
density classification problem is. Apparently this has never been done
for the general case.

The most natural definition would be that a transition rule solves the
density classification problem if it classifies density correctly for
every ring size. I will call this a \emph{strong} solution; it is also
the interpretation used by Land and Belew \cite{Land1995}.

This definition is natural because it enforces a true computation of
the cellular automaton: It excludes the trivial case of a transition
rule of radius $\lceil \frac{n}{2} \rceil$ on a ring of $n$ cells that
solves the classification problem in one step because each cell can
see the whole ring.

In other cases, the algorithm depends on a parameter that must be
adjusted according to the ring size. An example is the family of
transition rules found by Briceño \emph{et al.}\ \cite{Briceno2013}.
They show that for every $\epsilon$ there is a transition rule which
classifies an initial configuration correctly if its density is less
that $\frac12 - \epsilon$ or greater than $\frac12 + \epsilon$. If the
ring has less than $\epsilon^{-1}$ cells, all initial configurations
have this property and are therefore classified correctly. Some
configurations on larger rings are however classified incorrectly; so
this rule is not a strong solution.

Nevertheless, it is no trivial algorithm, since no cell in such a
solution sees the whole ring. I will therefore call a family of
transition rules a \emph{weak solution} if it contains for each ring
size a rule that classifies density correctly and has a radius that is
significantly smaller than the ring size.

\subsection{Notes on the History of the Problem}

It is not completely clear who first proposed the Density
Classification Task. Apparently it became popular after Packard
\cite{Packard1988} used it as a test case for the ``edge of chaos''
hypothesis \cite{Langton1990}. He used evolutionary programming to
find cellular automata that are good classifiers; the fitness of a
transition rule was related to the fraction of initial configurations
it classifies correctly.

At that time, the best known approximate solution of the problem was a
transition rule by Gács, Kurdiumov and Levin \cite{Gacs1987}. It had
originally been proposed as a solution for a different problem, but it
also solves the classification task correctly for 97\% of the
densities, in a cellular automaton with ring size 149. In later
literature \cite{Fuk's1997,Mitchell1998,Mitchell1994,Mitchell1993} it
was used as a benchmark for other approximate solutions and as an
example for the concrete behaviour of an approximate solution of the
problem.

Among the people who evolved cellular automata for density
classification were Land and Belew. They found an automaton roughly as
successful as the solution of Gács \emph{et al.}, but
converging much faster. However, their ``difficulties in evolving even
better solutions led [them] to wonder whether such a CA actually
exists'' \cite[Introduction]{Land1995}, and they found the
impossibility proof.

After that, researchers began to modify the task, trying to find
variants that were solvable. Capcarrère, Sipper and Tomassini
\cite{Capcarr`ere1996} modified the output specification: Their
automaton is required to evolve to a configuration with the cells
alternately in states 0 and 1, except for longer blocks of 0s (if the
initial density was less than $\frac12$) or longer blocks of 1s (if
the density was lower than $\frac12$). Their solution, which
classifies all densities correctly, is the ECA rule with code number
184, the ``traffic rule''.

One can also change the number of states in the cellular automaton.
The initial and final configuration still have only cells in state 0
or 1, but in between more states are allowed. That is what Briceño
\emph{et al.}\ \cite{Briceno2013} did; their result is only a weak
solution, as we have seen.\footnote{A strong solution is however not
  excluded by Land and Belew's proof!}

Fuk\'s \cite{Fuk's1997} changed instead the algorithm and applied two
different cellular automata rules in succession. Initially, ECA 184 is
applied for $\left\lfloor \frac{n-2}2 \right\rfloor$ steps to generate
the pattern of alternating 0 and 1 cells and either longer blocks of
0s or 1s. Then ECA 232 is run; it is the ``majority rule'' in which
the next state of a cell is the state of the majority of the cells in
its 3-cell neighbourhood. With it, the cell states that are in the
minority slowly die out, and a configuration consisting completely of
0s or 1s is reached. It is a weak solution because the algorithm
depends on the ring size.

Sipper, Capcarrère and Ronald \cite{Sipper1998} changed topology and
output condition: Instead of using a ring they organised the cells as
a finite line, with a cell at the left end that stays always in state
0 and a cell at the right end that stays always in state 1. Then their
automaton evolves to a configuration where all cells in state 0 are at
the left and those in state 1 are at the right---and their number
unchanged---so the state of the cell in the middle is that of the
majority. They compared the unsolvable original version of the problem
\cite{Gacs1987} with the ``easy'' solutions---that by Capcarrère
\emph{et al.}\ \cite{Capcarr`ere1996} and their own---and concluded
that
\begin{quote}
  ``density is not an intrinsically hard problem to compute. This
  raises the general issue of identifying intrinsically hard problems
  for such local systems, and distinguishing them from those that can
  be transformed into easy problems.'' \cite[p.~902]{Sipper1998}
\end{quote}

Fuk\'s \cite{Fuk's2002} extended the task to stochastic cellular
automata. His algorithm is a randomised version of the majority rule:
\begin{quote}
  ``empty sites become occupied with a probability proportional to the
  number of occupied sites in the neighborhood, while occupied sites
  become empty with a probability proportional to the number of empty
  sites in the neighborhood.'' \cite[Introduction]{Fuk's2002}
\end{quote}
The algorithm classifies density in the sense that the probability
that the automaton reaches the configuration with all cells in state 1
grows with the fraction of 1s in the initial state. Thus it is more
probable for a configuration with density $> \frac12$ to evolve to a
configuration of 1s than to one of 0s, and vice versa. The additional
probability for a correct classification is however not very high
\cite[p.~233]{Fates2013}.

Fatès \cite{Fates2013} then introduced the so-called
``traffic-majority'' rules---a family of transition rules that look
like stochastic versions of the algorithm by Fuk\'s \cite{Fuk's1997}.
Instead of applying his two rules in sequence, at every time step and
for every cell either Rule 184 or Rule 232 is applied. The algorithm
can be ``tuned'' to arbitrarily high classification qualities, but at
a price: The higher the probability is for Rule 184, the better the
classification---but it also becomes slower. Although there is as yet
no formal proof, this is a sign that the algorithm is only a weak
solution.

The traffic-majority rules will now be investigated in more detail.

\section{Definitions}

\subsection{Notations and conventions}

This section explains the notations I will use here. Some of them are
not in common use, others are adapted to the requirements of this
work, others are nonstandard.

We will use sometimes the ``square bracket'' notation
\cite[p.~24]{Graham1989} for formulas with conditional terms.
\begin{definition}[Predicates as Numbers]
  Let $P$ be a predicate. Then $[P] = 0$ if $P$ is false; otherwise
  $[P] = 1$.
\end{definition}

The following definitions are in relatively widespread use:
\begin{definition}[Sets]
  The set of all positive integers is $\N = \{ 1, 2, 3, \dots \}$; the
  set of all non-negative integers is $\N_0 = \N \cup \{ 0 \}$.

  We write $\Z_n$ for $\Z / n\Z$.

  The cardinality of a finite set $S$ is $\# S$.

  We use the convention that the minimum of a set $S \subseteq \Z
  \cup \{ \infty \}$ becomes $\infty$ when $S$ is empty.
\end{definition}

We will use tuples and infinite sequences to arrange the elements of a
set as a larger entity. Most infinite sequences will describe the
development of an object over time; they are then indexed by the
variable $t$. Sometimes we will use a tuple of sequences. In this case
the tuple is typographically distinguished from its elements by an
arrow accent.
\begin{definition}[Sequences and Tuples]
  Let $S$ be a set and $I \subseteq \N_0$. The sequence $(x_t)_{t \in
    I}$, consisting of elements $x_t \in S$, is often referred to by
  the letter $x$. We write, ``$x = (x_t)_{t \in I}$''.

  If a finite tuple, like $(b_0, b_1, \dots, b_n)$, is referred to as
  a single entity, a letter with an arrow accent is used. So we will
  write, ``$\vec b = (b_0, b_1, \dots, b_n)$''.
\end{definition}

Random variables are functions from a sample space $\Omega$ to a set
$S$ \cite[p. 19]{Grimmett1989}. I will now introduce a notation to
write more easily about them.
\begin{definition}[Random Variables]
  \label{def:random-variables}
  Let $S$ be a set. The set of random variables with values in $S$ is
  written $S^\Omega$.
\end{definition}
Where possible, I will also adhere to the convention that random
variables are written with capital letters and their values with
lower-case letters.



Finally there is a useful abbreviation for the probability that an
event does \emph{not} happen; it will used very often.
\begin{definition}[Inverted Probability]
  \label{def:inverted-probability}
  Let $p \in \R$. Then $\bar p = 1 - p$.
\end{definition}

\subsection{Stochastic Cellular Automata}

The following definitions are already quite specific to the problem of
density classification: What I define here should strictly be called a
``stochastic cellular automaton with two states and a circular
arrangement of cells''.

The following definition describes the arrangement and the possible
states of the cells in a cellular automaton.
\begin{definition}[Configurations]
  The set $\Sigma = \{0, 1\}$ is the set of cell \emph{states} of
  the cellular automaton.

  A \emph{configuration} of a cellular automaton is a function $c
  \colon \Z_n \to \Sigma$. The set $\Z_n$ is the set of all cell
  \emph{locations}. The number $n$ is the \emph{ring size} of the
  cellular automaton. The set of configurations is $\Conf(n)$.
\end{definition}
For a configuration $c \in \Conf(n)$, the value of $c(x)$ is the state
of the cell at location $x$.

\begin{definition}[Configurations as Sequences]
  We sometimes write a configuration $c \in \Conf(n)$ as a sequence
  \begin{equation}
    c(0) c(1) c(2) \dots c(n-2) c(n-1)\,.
  \end{equation}
\end{definition}
Thus $0101010101\dots01$ is a configuration with cells alternately in
state 0 and state 1. For such sequences we will also use the language
of regular expressions. We then say that the configuration above is of
the \emph{form} $(01)^*$.

For two-states stochastic cellular automata, the transition rules can
be written in a special way, such that they generalise the transition
rule of deterministic cellular automata.
\begin{definition}[Stochastic Cellular Automaton]
  Let $r \in \N_0$. A stochastic \emph{local transition rule} of
  radius $r$ is a function
  \begin{equation}
    \label{eq:transition-rule}
    \phi \colon \Sigma^{2r+1} \to [0,1]\,.
  \end{equation}
  To $\phi$ corresponds, for every $n \in \N$, a stochastic
  \emph{global transition rule}. It is a function $\Phi \colon
  \Conf(n) \to \Conf(n)^\Omega$ with
  \begin{equation}
    \label{eq:global-transition-rule}
    \P(\Phi(c)(x) = 1) = \phi(c(x - r), \dots, c(x + r))
    \qquad
    \text{for all $x \in \Z_n$.}
  \end{equation}
  When we speak of the ``transition rule'' without further
  qualifications, we mean the local transition rule.

  A \emph{stochastic binary cellular automaton} is a pair $(\phi,
  n)$, where $\phi$ is a stochastic transition rule and $n \in \N$.

  The \emph{evolution} under $\phi$ of the initial configuration $C
  \in \Conf(n)^\Omega$ is the stochastic process $(\Phi^t(C))_{t \geq
    0} \in \left(\Conf(n)^{\N_0}\right)^\Omega$ of the iterated
  applications of $\Phi$ to $C$.
\end{definition}
If all values of the function $\phi$ are either 0 or 1, then the
cellular automaton is deterministic
and~\eqref{eq:global-transition-rule} becomes
\begin{equation}
  \label{eq:deterministic-transition}
  \Phi(c)(x) = \phi(c(x - r), \dots, c(x + r))\,.
\end{equation}

\subsection{Density Classification}

Now we can state the Density Classification Task for stochastic
cellular automata in a precise form. It is specified in terms of the
relative numbers of cells the in states 0 and 1 in the initial
configuration. For the calculations it is however often simpler to
work with the number of the cells in a given state, since it is an
integer. Therefore we define here notations for both concepts.
\begin{definition}[Density]
  Let $c \in \Conf(n)$ and $\sigma \in \Sigma$. Then
  \begin{equation}
    \label{eq:number-in-state}
    \#_\sigma(c)
    = \#\set{x \in \Z_n\colon c(x) = \sigma}
  \end{equation}
  is the number of cells in $c$ that are in state $\sigma$. The
  \emph{density} of state\/ $\sigma$ in $c$ is
  \begin{equation}
    \label{eq:density}
    \delta_\sigma(c) = \frac{\#_\sigma(c)}{n}\,.
  \end{equation}
  When we speak of ``density'' unqualified, $\delta_1$ is meant.
\end{definition}

The classification process has ended when all cells are in the same
state, that of the majority of cells in the initial configuration. But
this majority does not exist for a configuration with as many cells in
state 0 as in state 1. So we will exclude such initial configurations
and restrict $n$ to odd numbers. (This is the usual way, see e\,.g.\
\cite[p.~126]{Mitchell1998}.) So we will leave out in the next
definition the case of an initial configuration with density
$\frac12$.
\begin{definition}[End States]
  The \emph{end states} of the density classification problem are the
  configurations $e_\sigma \in \Conf(n)$ with $\sigma \in \Sigma$.
  \begin{equation}
    \label{eq:end-states}
    e_\sigma(x) = \sigma
    \qquad\text{for all $x \in \Z_n$.}
  \end{equation}
  Let $c \in \Conf(n)$ be a configuration. The \emph{end state for
    $c$} is
  \begin{equation}
    \label{eq:end-state}
    e(c) =
    \left\{
      \begin{matrix}
        e_0 & \text{if $\delta_0(c) > \frac12$}, \\
        e_1 & \text{if $\delta_1(c) > \frac12$}.
      \end{matrix}
    \right.
  \end{equation}
\end{definition}

A requirement implicit in the designation of $e_0$ and $e_1$ as ``end
states'' is that when the cellular automaton reaches these states, it
does not change again. This means that in a transition rule that
solves the Density Classification Task the configurations $e_0$ and
$e_1$ must be fixed points of the global transition rule $\Phi$. The
following definition expresses this in terms of the local transition
rule.
\begin{definition}[Admissible Rules]
  A transition rule $\phi \colon \Sigma^{2r + 1} \to [0, 1]$ is
  \emph{admissible} for the density classification problem if both
  $\phi(0, \dots, 0) = 0$ and $\phi(1, \dots, 1) = 1$.
\end{definition}

\paragraph{Classification Quality} Now we define the quality of a
transition rule $\phi$ as solution of the Density Classification Task.
We will use two measures of quality. The first one is the probability
that the cellular automaton finds the right answer, and the second one
is the time this computation takes. We begin with a definition of
these quantities for the case of a given initial configuration~$c$.
\begin{definition}[Classification Quality]
  Let $\phi$ be an admissible transition rule and let $n \in \N$ be an
  odd number. Let $c \in \Conf(n)$ be an initial configuration.

  The \emph{classification time} of $c$ for $\phi$ is the random
  variable $T(\phi, c) \in (\N_0 \cup \{\infty\})^\Omega$. It is the
  first time the evolution of $c$ reaches the state $e(c)$, or
  $\infty$ if that never happens.
  \begin{equation}
    \label{eq:classification-time}
    T(\phi, c)
    = \min\set{ t \in \N_0\colon \Phi^t(c) = e(c) }\,.
  \end{equation}
  The \emph{classification quality} of $\phi$ for $c$ is the number
  $q(\phi, c) \in [0, 1]$. It is the probability that the evolution of
  $c$ under $\phi$ reaches the state $e(c)$ at some time.
  \begin{equation}
    \label{eq:classification-quality}
    q(\phi, c) = \P( T(\phi, c) \neq \infty )\,.
  \end{equation}
\end{definition}

Next we need to specify what we mean with a random initial
configuration. Two methods are in use. The first one assigns to all
possible configurations the same probability. The second one is a
two-step process, in which the density is chosen first and then among
the configurations with that density one is selected. In both steps,
all outcomes have the same probability.

The first method assigns a high probability to densities near
$\frac12$. This is however a range where in a stochastic cellular
automaton the probability of a wrong classification is especially
high: Random fluctuations that transform a configuration with density
less than $\frac12$ to a configuration with density greater than
$\frac12$, or vice versa, can occur easily. Therefore we will use the
second method to get a more differentiated view of the cellular
automaton. The explicit choice of the initial density will also allow
us to investigate its influence on the recognition quality.

For a given ring size $n$ there are ${n \choose n_1}$ configurations
with $n_1$ cells in state 1 and $n + 1$ possible values for $n_1$, so
we define:
\begin{definition}[Initial Configurations]
  \label{def:initial-configurations}
  Let $n \in \N$ be the ring size of a cellular automaton and let $n_1
  \in \N_0$ be a number with $n_1 \leq n$.

  A \emph{random initial configuration of density $\frac{n_1}{n}$} is
  a random variable $\mathcal{I}(n, n_1) \in \Conf(n)^\Omega$ with
  probability
  \begin{align}
    \label{eq:random-initial-d}
    \P(\mathcal{I}(n, n_1) = c) &=
    \frac1{{n \choose n_1}} [\#_1(c) = n_1]
    &&\text{for all $c \in \Conf(n)$.}
  \end{align}
  A \emph{random initial configuration} is a random variable
  $\mathcal{I}_n \in \Conf(n)^\Omega$ with probability
  \begin{align}
    \label{eq:random-initial}
    \P(\mathcal{I}_n = c)
    &= \frac1{n + 1} \frac1{{n \choose \#_1(c)}}
    &&\text{for all $c \in \Conf(n)$.}
  \end{align}
\end{definition}
Now we can use these functions as arguments to the function $q$
of~\eqref{eq:classification-quality}, to get the classification qualities
for random initial configurations: For a transition rule $\phi$ and a
given ring size $n$, the \emph{classification quality} for an initial
density $d$ is $q(\phi, \mathcal{I}(n, \lfloor d n \rfloor))$, and the
\emph{global classification quality} is $q(\phi, \mathcal{I}_n)$.

\section{The Traffic-Majority Rules}

\subsection{Observations}

Here we begin with the analysis of the traffic-majority rules
\cite{Fat`es2011a}.
\begin{definition}[Traffic-Majority Rules]
  \label{def:fates-rule}
  The \emph{Traffic-Majority Rules} are a family of stochastic
  transition rules of radius 1, parameterised by a number $\eta \in
  [0, 1]$. Their transition functions are
  \begin{equation}
    \label{eq:fates-rule}
    \begin{aligned}
      \phi_\eta(0,0,0) &= 0, &
      \phi_\eta(0,0,1) &= 0, &
      \phi_\eta(1,0,0) &= \bar\eta, &
      \phi_\eta(1,0,1) &= 1,\\
      \phi_\eta(1,1,1) &= 1, &
      \phi_\eta(0,1,1) &= 1, &
      \phi_\eta(1,1,0) &= \eta, &
      \phi_\eta(0,1,0) &= 0, \\
    \end{aligned}
  \end{equation}
  The global transition function associated to a $\phi_\eta$ is
  $\Phi_\eta$.
\end{definition}
The equations in~\eqref{eq:fates-rule} are arranged in a special way
to show a symmetry in the rules: The values of $\phi_\eta(\sigma_{-1},
\sigma_0, \sigma_1)$ with $\sigma_0 = 0$ are in the top row and those
with $\sigma_0 = 1$ are in the bottom row. The we see,
\begin{equation}
  \label{eq:fates-symmetry}
  \overline{\phi(\sigma_{-1}, \sigma_0, \sigma_1)} = 
  \phi(\overline{\sigma_1}, \overline{\sigma_0}, \overline{\sigma_{-1}})
  \qquad
  \text{for all $\sigma_{-1}$, $\sigma_0$, $\sigma_1 \in \Sigma$.}
\end{equation}
This symmetry of the rule causes a similar symmetry in the
classification quality.
\begin{figure}[ht]
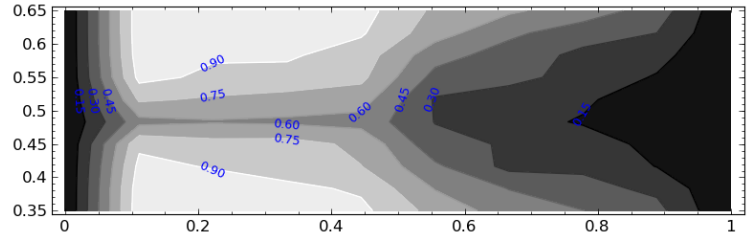

  \incgraphc{local-recognition-quality}
  \caption{Contour plot of the classification qualities $q(\phi_\eta,
    \mathcal{I}(51,n_1))$ of the traffic-majority rules. It shows the
    fraction of correct classifications for a cellular automaton of
    ring size $n =51$ that is run for 400 time steps. The horizontal
    axis is the $\eta$ parameter of the transition rule, and the
    vertical axis is the initial density $\frac{n_1}{n}$ of cells in
    state 1.}
  \label{fig:local-classification-quality}
\end{figure}
It becomes visible in Figure~\ref{fig:local-classification-quality},
which shows how the classification quality of $\phi_\eta$ depends on
$\eta$ and on the density of the initial configuration.

Another property of the traffic-majority rules that one can see in
Figure~\ref{fig:local-classification-quality} is the different
behaviour of the rule for values less and greater than $\frac12$. If
$\eta < \frac12$, the classification quality is quite good, especially
in the interval $0.15 < \eta < 0.45$, while it becomes bad very fast
for $\eta > 0.45$: For most densities of ones in an initial
configuration, the classification quality is less than $\frac12$,
which means that in most cases they are \emph{not} classified
correctly.

We can also see in Figure~\ref{fig:local-classification-quality} that
the classification quality is bad for small $\eta$. The reason for
this must be that for small $\eta$ the classification time is very
large and the classification process has not finished when the
simulation ends. This was already noted by Fatès
\cite[p.~240]{Fates2013}.

We will look into all these phenomena in more detail in the following
sections, beginning with the symmetry~\eqref{eq:fates-symmetry} of the
transition rule.

\subsection{Symmetry}

We see from Figure~\ref{fig:local-classification-quality} that the
classification quality for the initial configuration with density
$\frac{n_1}{n}$ seems to be the same as that for an initial
configuration with density $1 - \frac{n_1}{n} = \frac{n - n_1}{n}$. So
it may be that each configuration with $n_1$ cells in state 1
corresponds to a configuration with $n - n_1$ cells in state 1 that
has the same classification quality. The second configuration could be
constructed by replacing the cells in state 1 with cells in state 0
and vice versa, but~\eqref{eq:fates-symmetry} suggests that the order
of the cells must be reversed too.

This leads to the following definition, an extension of the notation
defined in Definition~\ref{def:inverted-probability} from numbers to
configurations.
\begin{definition}[Inverted Configuration]
  \label{def:inverted-config}
  Let $c \in \Conf(n)$. The \emph{inversion} of $c$ is the
  configuration $\bar c \in \Conf(n)$ with
  \begin{equation}
    \label{eq:inverted-config}
    \bar c(x) = \overline{c(n-x)}
    \qquad\text{for all $x \in \Z_n$.}
  \end{equation}
  Let $C \in \Conf(n)^\Omega$ be a random variable. The
  \emph{inversion} of $C$ is the random variable $\bar C \in
  \Conf(n)$ with
  \begin{equation}
    \label{eq:inverted-config-random}
    \P(\bar C = c) = \P(C = \bar c)
    \qquad\text{for all $c \in \Conf(n)$.}
  \end{equation}
\end{definition}

We now state the symmetry property in a general form, because we will
need it later again for another type of initial conditions.
\begin{lemma}[Symmetry]
  \label{thm:symmetry}
  Let $n \in \N$ and $c \in \Conf(n)$. Then
  \begin{equation}
    \label{eq:rq-symmetry}
    T(\phi_\eta, \bar c) = T(\phi_\eta, c)
    \qqtext{and}
    q(\phi_\eta, \bar c) = q(\phi_\eta, c)\,.
  \end{equation}
\end{lemma}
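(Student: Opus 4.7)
The plan is to establish that the stochastic processes $(\Phi_\eta^t(\bar c))_{t \geq 0}$ and $(\overline{\Phi_\eta^t(c)})_{t \geq 0}$ are equal in distribution, and that the inversion operation swaps the two end states $e_0$ and $e_1$. The lemma then follows because under this correspondence the hitting time of the correct end state is preserved.

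First I would handle the endpoint bookkeeping. From $\bar c(x) = \overline{c(n-x)}$ one gets $\#_1(\bar c) = \#_0(c)$, hence $\delta_1(c) > \frac12$ iff $\delta_0(\bar c) > \frac12$; together with the obvious $\overline{e_\sigma} = e_{\bar\sigma}$ this gives $e(\bar c) = \overline{e(c)}$.

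The heart of the argument is the one-step equality: $\Phi_\eta(\bar c)$ and $\overline{\Phi_\eta(c)}$ have the same distribution. For each cell $x \in \Z_n$, the definition of $\bar c$ and equation \eqref{eq:global-transition-rule} yield
\begin{equation*}
  \P(\Phi_\eta(\bar c)(x) = 1)
  = \phi_\eta\bigl(\overline{c(n-x+1)},\, \overline{c(n-x)},\, \overline{c(n-x-1)}\bigr).
\end{equation*}
Applying the symmetry \eqref{eq:fates-symmetry} to the right-hand side rewrites it as $\overline{\phi_\eta(c(n-x-1), c(n-x), c(n-x+1))} = \P(\Phi_\eta(c)(n-x) = 0) = \P(\overline{\Phi_\eta(c)}(x) = 1)$, where the last step uses Definition~\ref{def:inverted-config}. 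Because in a stochastic cellular automaton the cell updates are independent, this equality of Bernoulli parameters at every $x$ upgrades to equality of the joint laws of the two random configurations.

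A straightforward induction on $t$, combined with the Markov property of the iteration, extends this to equality in distribution of the whole trajectories $(\Phi_\eta^t(\bar c))_t$ and $(\overline{\Phi_\eta^t(c)})_t$. The hitting time $T(\phi_\eta, \bar c) = \min\{ t : \Phi_\eta^t(\bar c) = e(\bar c) \}$ therefore agrees in distribution with
\begin{equation*}
  \min\{ t : \overline{\Phi_\eta^t(c)} = \overline{e(c)} \}
  = \min\{ t : \Phi_\eta^t(c) = e(c) \}
  = T(\phi_\eta, c),
\end{equation*}
and the equality of classification qualities follows at once from $q(\phi_\eta, c) = \P(T(\phi_\eta, c) \neq \infty)$. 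The only real obstacle is purely notational: one has to keep careful track of how the index reversal $x \mapsto n-x$ in $\bar c$ interacts with the neighbourhood reversal in \eqref{eq:fates-symmetry}, so that the three complemented arguments line up in the correct order. Once the one-step identity is set up cleanly, nothing deeper is required.
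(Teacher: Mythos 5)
Your proposal is correct and follows essentially the same route as the paper's own proof: the per-cell computation using the rule symmetry \eqref{eq:fates-symmetry}, induction over $t$, and the identity $e(\bar c) = \overline{e(c)}$ to transfer the hitting time. If anything, you are slightly more careful than the paper in noting that the per-cell identity only gives equality of the Bernoulli parameters, so that independence of the cell updates is needed to upgrade this to equality of the joint laws (and hence that $T$ is preserved in distribution rather than pointwise); the paper states the stronger-looking pointwise equality but its argument really establishes the same distributional fact.
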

\begin{proof}
  First we prove that $\Phi_\eta(\bar c) =
  \overline{\Phi_\eta(c)}$ for all $c \in \Conf(n)$. For this, let
  $x \in \Z_n$. Then
  \begin{equation}
    \begin{aligned}[b]
      \P\left( \bigl(\overline{\Phi_\eta(c)} \bigr)(x) = 1 \right)
      \hspace{-4em} \\ 
      &= \P(\Phi_\eta(c)(n - x) = 0) \\
      &= \overline{\P(\Phi_\eta(c)(n - x) = 1)} \\
      &= \overline{\phi_\eta(
        c(n - x - 1), c(n - x), c(n - x + 1))} \\
      &= \phi_\eta \bigl(\overline{c(n - x + 1)},
      \overline{c(n - x)}, \overline{c(n - x - 1)}\bigr)
      && \quad\text{by~\eqref{eq:fates-symmetry}}\\
      &= \phi_\eta(\bar c(x - 1), \bar c(x), \bar c(x + 1)) \\
      &= P( \Phi_\eta(\bar c)(x) = 1)\,.
    \end{aligned}
  \end{equation}
  Then we can show by induction that $\overline{\Phi_\eta^t(c)} =
  \Phi_\eta^t(\bar c)$ for all $t \geq 0$. From~\eqref{eq:end-state}
  follows $e(\bar c) = \overline{e(c)}$. Therefore
  \begin{equation}
    \label{eq:endtime}
    \Phi^t(c) = e(c)
    \qtext{if and only if}
    \Phi^t(\bar c) = e(\bar c)
    \qquad\text{for all $t \geq 0$.}
  \end{equation}
  Since $T(\phi, c) = \min\{t \in \N_0: \Phi^t(c) = e(c)\}$, this
  proves that $T(\phi, c) = T(\phi, \bar c)$. Since $q(\phi, c) = \P(
  q(\phi, c) \neq \infty)$, this shows also that $q(\phi, c) = q(\phi,
  \bar c)$.
\end{proof}

We will now specialise this theorem to the random initial
configurations $\mathcal{I}(n, n_1)$. The following theorem shows that
a random configuration with density $1 - \frac{n_1}{n}$ has the same
classification quality as one with density $\frac{n_1}{n}$.
\begin{theorem}[Symmetry for $\mathcal{I}(n, n_1)$]
  \label{thm:symmetry-d}
  Let $n \in \N$, $n_1 \in \N_0$ with $n_1 \leq n$. Then
  \begin{equation}
    \label{rq-symmetry-d}
    \begin{aligned}
      T(\phi_\eta, \mathcal{I}(n, n_1))
      &= T(\phi_\eta, \mathcal{I}(n, n - n_1)), \\
      q(\phi_\eta, \mathcal{I}(n, n_1))
      &= q(\phi_\eta, \mathcal{I}(n, n - n_1))\,.
    \end{aligned}
  \end{equation}
\end{theorem}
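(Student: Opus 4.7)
The plan is to reduce this theorem to Lemma~\ref{thm:symmetry} by observing that the inversion map $c \mapsto \bar c$ carries the distribution $\mathcal{I}(n, n_1)$ to the distribution $\mathcal{I}(n, n - n_1)$. First I would note that, from Definition~\ref{def:inverted-config}, inversion swaps the roles of the two states while permuting cells, so $\#_1(\bar c) = \#_0(c) = n - \#_1(c)$. Consequently the map $c \mapsto \bar c$ is an involution on $\Conf(n)$ that restricts to a bijection between configurations with $n_1$ ones and configurations with $n - n_1$ ones.

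Next I would establish the distributional identity $\overline{\mathcal{I}(n, n_1)} = \mathcal{I}(n, n - n_1)$. Using \eqref{eq:inverted-config-random} together with \eqref{eq:random-initial-d}, for every $c \in \Conf(n)$,
\begin{equation*}
  \P(\overline{\mathcal{I}(n, n_1)} = c)
  = \P(\mathcal{I}(n, n_1) = \bar c)
  = \frac{[\#_1(\bar c) = n_1]}{\binom{n}{n_1}}
  = \frac{[\#_1(c) = n - n_1]}{\binom{n}{n - n_1}}
  = \P(\mathcal{I}(n, n - n_1) = c),
\end{equation*}
where I used $\binom{n}{n_1} = \binom{n}{n - n_1}$.

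To finish, I would condition on the value of the initial configuration and apply Lemma~\ref{thm:symmetry} termwise. For every $t \in \N_0 \cup \{\infty\}$,
\begin{equation*}
  \P(T(\phi_\eta, \mathcal{I}(n, n_1)) = t)
  = \sum_{c} \frac{[\#_1(c) = n_1]}{\binom{n}{n_1}}
    \P(T(\phi_\eta, c) = t),
\end{equation*}
and the change of variables $c' = \bar c$, together with $\P(T(\phi_\eta, \bar{c'}) = t) = \P(T(\phi_\eta, c') = t)$ from Lemma~\ref{thm:symmetry}, rewrites the sum as $\P(T(\phi_\eta, \mathcal{I}(n, n - n_1)) = t)$. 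Taking $t = \infty$ and subtracting from $1$ gives the analogous identity for $q$ via \eqref{eq:classification-quality}.

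The only real subtlety is interpretational: the equalities in Lemma~\ref{thm:symmetry} are equalities in distribution, not almost-sure equalities, and the same is true of the statement to be proved. Once one is explicit that the randomness of the evolution is independent of the draw of the initial configuration, the argument is just a relabelling of summation indices via the bijection $c \mapsto \bar c$; no calculation specific to $\phi_\eta$ is needed beyond what Lemma~\ref{thm:symmetry} already supplies.
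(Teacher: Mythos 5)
Your proposal is correct and follows essentially the same route as the paper: establish the distributional identity $\overline{\mathcal{I}(n, n_1)} = \mathcal{I}(n, n - n_1)$ via the binomial symmetry $\binom{n}{n_1} = \binom{n}{n-n_1}$ and the fact that $\#_1(\bar c) = n - \#_1(c)$, then invoke Lemma~\ref{thm:symmetry}. Your extra step of conditioning on the initial configuration and summing over the bijection $c \mapsto \bar c$ just makes explicit what the paper leaves implicit when it applies the lemma to a random initial configuration.
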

\begin{proof}
  To apply Lemma~\ref{thm:symmetry} we need to prove that
  $\mathcal{I}(n, n - n_1) = \overline{\mathcal{I}(n, n_1)}$.

  The proof relies on the symmetry of the binomial coefficient, ${n
    \choose n - n_1} = {n \choose n_1}$ for $0 \leq n_1 \leq n$, and
  on the fact that $\#_1(c) = n - \#_1(\bar c)$ for all $c \in
  \Conf(n)$. Then we use~\eqref{eq:random-initial-d} to calculate
  \begin{equation}
    \begin{aligned}[b]
      \P (\mathcal{I}(n, n - n_1) = c)
      &= \frac{[\#_1(c) = n - n_1]}{{n \choose n - n_1}} \\
      &= \frac{[\#_1(\bar c) = n_1]}{{n \choose n_1}} \\
      &= \P(\mathcal{I}(n, n_1) = \bar c)
      = \P(\overline{\mathcal{I}(n, n_1)} = c)\,.
    \end{aligned}
  \end{equation}
  This proves that the initial configuration $\mathcal{I}(n, n -n_1)$
  has the same probability distribution as $\overline{\mathcal{I}(n,
    n_1)}$.
\end{proof}

Theorem~\ref{thm:symmetry-d} allows us to restrict our attention to
initial configurations with density less than $\frac12$. We will do
this in the remaining part of the text.

\section{Blocks of Cells in State 1}

An analysis of the behaviour of the traffic-majority rule would be
very complex if we allowed all elements of $\Conf(n)$ as initial
configurations: There is a huge variety of initial configurations,
and, since $\phi_\eta$ is non-deterministic, also a huge variety of
evolutions for a single initial configuration.

We will therefore now look at one family of initial configurations in
more detail, namely those that consist of a block of cells in state 1,
surrounded by cells in state 0. We will see that they share a common
behaviour.

\subsection{Observations}

To get a classification quality for blocks we define initial
configurations similar to those of
Definition~\ref{def:initial-configurations}.
\begin{definition}[Blocks as Initial States]
  Let $n \in \N_0$.

  Let $\ell \in \N_0$ with $\ell \leq n$. A \emph{1-block of length
    $\ell$} is a configuration $\mathcal{B}(n, \ell) \in \Conf(n)$ with
  \begin{equation}
    \label{eq:1-block}
    \mathcal{B}(n, \ell)(x) = [x \in \{0, \dots, \ell - 1\}]
    \qquad\text{for all $x \in \Z_n$.}
  \end{equation}
  A \emph{random block} is a configuration $\mathcal{B}_n \in
  \Conf(n)^\Omega$ with
  \begin{equation}
    \label{eq:random-block}
    \P(\mathcal{B}_n = \mathcal{B}(n, \ell)) = \frac1{n+1}
    \qquad\text{for all $\ell \in \N_0$ with $\ell \leq n$.}
  \end{equation}
\end{definition}
The Block $\mathcal{B}(n, \ell)$ is an initial configuration with
density $\frac{\ell}{n}$. Note that in contrast to $\mathcal{I}(n,
n_1)$, it is not a random variable.

\begin{figure}[ht]
  \incgraphc{local-block-recognition}
  \caption{Classification quality for a single block of ones under the
    traffic-majority rule. The parameters of this contour plot are the
    same as in Figure~\ref{fig:local-classification-quality}.}
  \label{fig:local-block-classification}
\end{figure}
The local classification quality for blocks
(Figure~\ref{fig:local-block-classification}) has the same general
form as that for generic initial conditions; for $\eta > \frac12$ it
is however even worse. A partial explanation for this behaviour will
be given in Section~\ref{sec:glob-class-qual}.

The symmetry property for 1-block initial conditions $\mathcal{B}(n,
\ell)$ can now be proved in the same way as that for the generic
initial conditions $\mathcal{I}(n, n_1)$.
\begin{theorem}[Symmetry for 1-Blocks]
  \label{thm:block-symmetry}
  Let $n \in \N$ and $\ell \in \N_0$ with $\ell \leq n$. Then
  \begin{equation}
    \label{eq:block-symmetry}
    \begin{aligned}
      T(\phi_\eta, \mathcal{B}(n, \ell))
      &= T(\phi_\eta, \mathcal{B}(n, n - \ell)), \\
      q(\phi_\eta, \mathcal{B}(n, \ell))
      &= q(\phi_\eta, \mathcal{B}(n, n - \ell))\,.
    \end{aligned}
  \end{equation}
\end{theorem}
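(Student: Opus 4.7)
The plan is to reduce the claim to Lemma~\ref{thm:symmetry}, in analogy with the proof of Theorem~\ref{thm:symmetry-d}. The complication is that inversion does not send $\mathcal{B}(n, \ell)$ to $\mathcal{B}(n, n - \ell)$ on the nose. A direct computation using~\eqref{eq:inverted-config} and~\eqref{eq:1-block} shows instead that, for $1 \leq \ell \leq n - 1$, the configuration $\overline{\mathcal{B}(n, \ell)}$ carries its 1-cells on $\{1, 2, \ldots, n - \ell\}$, whereas $\mathcal{B}(n, n - \ell)$ carries them on $\{0, 1, \ldots, n - \ell - 1\}$. The two configurations therefore agree only up to the cyclic shift $\tau_1$ defined by $(\tau_1 c)(x) = c(x - 1)$; the boundary cases $\ell = 0$ and $\ell = n$ reduce to $\overline{e_0} = e_1$ and need no shift.

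To close this gap, I would first establish a translation-invariance lemma: for every $c \in \Conf(n)$ and every $k \in \Z_n$,
\begin{equation*}
  T(\phi_\eta, \tau_k c) = T(\phi_\eta, c)
  \qqtext{and}
  q(\phi_\eta, \tau_k c) = q(\phi_\eta, c)\,.
\end{equation*}
The proof proceeds cell by cell, exactly as in the opening computation of Lemma~\ref{thm:symmetry}: since $\phi_\eta$ carries no position dependence, one checks directly that $\P(\Phi_\eta(\tau_k c)(x) = 1) = \P((\tau_k \Phi_\eta(c))(x) = 1)$, and stochastic independence of the cells lifts this to an equality in distribution between $\Phi_\eta(\tau_k c)$ and $\tau_k \Phi_\eta(c)$. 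A short induction then gives the same equality for $\Phi_\eta^t(\tau_k c)$ and $\tau_k \Phi_\eta^t(c)$ at every $t \geq 0$, and since both end states $e_0$ and $e_1$ are invariant under all shifts, the stopping times and success probabilities coincide.

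The theorem then follows in three strokes: Lemma~\ref{thm:symmetry} gives $T(\phi_\eta, \mathcal{B}(n, \ell)) = T(\phi_\eta, \overline{\mathcal{B}(n, \ell)})$; the explicit description of $\overline{\mathcal{B}(n, \ell)}$ rewrites this as $T(\phi_\eta, \tau_1 \mathcal{B}(n, n - \ell))$; and the translation lemma collapses it to $T(\phi_\eta, \mathcal{B}(n, n - \ell))$. The same chain handles $q$. The main obstacle is the translation lemma itself: although morally obvious for any cellular automaton with a position-independent local rule, its clean formulation requires careful bookkeeping of the joint distribution of the independent cell-wise random variables that define $\Phi_\eta$, and this is the only genuinely new ingredient beyond what already appears in Lemma~\ref{thm:symmetry} and Theorem~\ref{thm:symmetry-d}.
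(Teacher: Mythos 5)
Your proposal is correct, and it follows the same overall strategy as the paper: reduce the statement to Lemma~\ref{thm:symmetry} by identifying $\overline{\mathcal{B}(n, \ell)}$ with $\mathcal{B}(n, n - \ell)$. The difference is that the paper simply asserts $\mathcal{B}(n, n - \ell) = \overline{\mathcal{B}(n, \ell)}$ and claims it ``can be seen directly'' from the definitions --- but your computation is right and the paper's claim is not literally true: by \eqref{eq:inverted-config} the configuration $\overline{\mathcal{B}(n, \ell)}$ has its cells in state 1 at positions $\{1, \dots, n - \ell\}$, while $\mathcal{B}(n, n - \ell)$ has them at $\{0, \dots, n - \ell - 1\}$, so the two agree only up to the cyclic shift $\tau_1$. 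Your translation-invariance lemma is exactly the ingredient needed to close this gap, and its proof goes through as you sketch it: $\phi_\eta$ has no position dependence, the cell-wise transition probabilities are independent, so $\Phi_\eta(\tau_k c)$ and $\tau_k \Phi_\eta(c)$ agree in distribution, induction extends this to all $t$, and both $e_0$ and $e_1$ (and the map $c \mapsto e(c)$, since density is shift-invariant) are unaffected by shifts. In short, your proof is more careful than the published one; the paper's argument is morally the same but skips over the one-cell offset, which is harmless precisely because of the shift-invariance you make explicit. One could alternatively avoid the extra lemma by redefining the inversion with $\bar c(x) = \overline{c(n - 1 - x)}$, which would make the identity exact, but as the definitions stand your patch is the right fix.
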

\begin{proof}
  We note that $\mathcal{B}(n, n - \ell) = \overline{\mathcal{B}(n,
    \ell)}$, which can be seen directly from
  Definition~\ref{def:inverted-config} and
  Equation~\eqref{eq:1-block}. Therefore we can apply
  Lemma~\ref{thm:symmetry}.
\end{proof}

\subsection{Behaviour of the Boundaries}

\begin{figure}[ht]
  \incgraphc{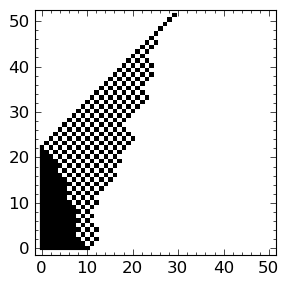}
  \caption{Phases in the development of a block. Phase~I ends at time
    step 21 and Phase~II at time step 51.}
  \label{fig:phases}
\end{figure}
Now we will look in more details into the evolution of blocks.
Figure~\ref{fig:phases} shows one example, and more samples are given
in Figure~\ref{fig:block_samples}. In all of them, the initial density
is less than $\frac12$, so the configurations are supposed to evolve
to $e_0$.

We can see from these pictures that the repertoire of configurations
that occur during the evolution of $\mathcal{B}(n, \ell)$ is limited:
They all have the general form $1^* (01)^*0^*$. We now begin with the
investigation of the behaviour of these configurations.

As we can see from Figure~\ref{fig:phases}, the evolution of a
$\mathcal{B}(n, \ell)$ seems to consist of two phases: one in which a
block of consecutive cells in state 1 is present and the other one in
which there is only the alternating pattern of 0 and 1.
\begin{definition}[Phases]
  The time $t$ in the evolution of $\mathcal{B}(n, \ell)$ in which the
  configuration $\Phi_\eta^t(\mathcal{B}(n, \ell))$ has the form
  $11^*(10)^*0^*$, is called \emph{Phase~I}.

  The time in which the configuration $\Phi_\eta^t(\mathcal{B}(n,
  \ell))$ has the form $(10)^*10^*$ or a shifted version of it, is
  \emph{Phase~II}.
\end{definition}
In both phases there must be at least one cell in state 1. The end
state $e_0$ belongs to neither of the phases; we could view it as a
third phase.

\begin{figure}[tp]
  \def\s#1{\includegraphics[scale=0.43]{block_samples/#1.png}} 
  \tabcolsep=0pt
  \center
  \begin{tabular}{ccccc}
    \toprule
    $\eta = 0$ & $\eta = 0.3$ & $\eta = 0.5$
    & $\eta = 0.7$ & $\eta = 1$ \\ \midrule
    \s{0} & \s{03-1} & \s{05-1} & \s{07-1} & \s{1} \\
    \s{0} & \s{03-2} & \s{05-2} & \s{07-2} & \s{1} \\
    \s{0} & \s{03-3} & \s{05-3} & \s{07-3} & \s{1} \\
    \s{0} & \s{03-4} & \s{05-4} & \s{07-4} & \s{1} \\
    \bottomrule
  \end{tabular}
  \flushleft
  \caption{Evolution of a block of cells in state 1 under the
    traffic-majority rule. There are four samples for each value of
    $\eta$; in the case of $\eta = 0$ and $\eta = 1$ the rule becomes
    deterministic. The initial generation is at the bottom of each
    picture.}
  \label{fig:block_samples}
\end{figure}

The shape of such a configuration is determined by only three numbers.
We will work with them instead of cell sequences.
\begin{definition}[Boundaries]
  \label{def:boundaries}
  Let $(C_t)_{t \geq 0} = (\Phi_\eta^t(\mathcal{B}(n, \ell)))_{t
    \geq 0}$ be the evolution of a 1-block. Let $C_t$ be a cell
  sequence of the form $1^* (01)^*0^*$ in which at least one cell is
  in state 1.

  Now we define $B_{0,t}$, $B_{1,t}$ and $B_{1,t} \in (\Z_n)^\Omega$
  in such a way that
  \begin{shortenum}
  \item the region $C_t(B_{0,t}), \dots, C_t(B_{1,t})$ has the form
    $11^*$,
  \item the region $C_t(B_{1,t}), \dots, C_t(B_{2,t})$ has the form
    $1(01)^*$, and
  \item the region $C_t(B_{2,t}), \dots, C_t(B_{0,t})$ has the form
    $10^*$.
  \end{shortenum}
  These three random variables are then the \emph{boundaries} of
  $C_t$. The region between $B_{0,t}$ and $B_{1,t}$ is the
  \emph{1-block} of $C_t$ and the region between $B_{1,t}$ and
  $B_{2,t}$ is the \emph{01-sequence}.

  We will also use the tuple $\vec B_t = (B_{0,t}, B_{1,t}, B_{2,t})$
  of the boundaries at time $t$.
\end{definition}
The meaning of these three variables is shown in the following
diagram:
\begin{equation}
  \label{eq:boundaries}
  \arraycolsep=0pt
  \def\u{\uparrow}
  \begin{array}{ccccccccccccccccccc}
    011&1111111111&110&10101010101&010&0\\
    \u &&          \u &&           \u\\
     B_{0,t} &&     B_{1,t} &&       B_{2,t}
  \end{array}
\end{equation}
During Phase~I we always have $B_{0,t} \neq B_{1,t}$, and during
Phase~II there is always $B_{0,t} = B_{1,t}$.

Now we can express the effect of transition rule $\phi_\eta$ on these
boundaries. First we introduce a shorter notation for their behaviour
in a single time step:
\begin{definition}[Single Step Probability]
  Let $(b_0, b_1, b_2)$ and $(b'_0, b'_1, b'_2) \in \Z_n^3$. We define,
  \begin{equation}
    P(b'_0, b'_1, b'_2 \from b_0, b_1, b_2) =
    \P(\vec B_{t+1} = (b'_0, b'_1, b'_2) \mid
    \vec B_t = (b_0, b_1, b_2))\,.
  \end{equation}
  Often we will write $P(b'_0, b'_1, b'_2 \from \vec b)$ for $P(b'_0,
  b'_1, b'_2 \from b_0, b_1, b_2)$.

  We will also write $P(e_0 \from b_0, b_1, b_2)$ for the probability
  that the state with boundaries $\vec B_t = (b_0, b_1, b_2)$ evolves
  in the next step to $e_0$.
\end{definition}

If a configuration grows very long during Phase~I, it may ``wrap
around'', such that the right end of the 01-sequence reaches the left
end of the 1-block. In this case our model will break down. We can
however expect that this kind of behaviour has low probability when
the initial 1-block is short enough: such a block is supposed to
vanish fast and to stay small while it exists. The examples in
Figure~\ref{fig:block_samples} support this expectation.

In the following analysis we will therefore assume that the initial
1-block is so short that wrap-arounds are improbable. We will later
compare the theoretical results with actual data and find the block
lengths for which the assumption is justified.

\begin{theorem}[Single Step Behaviour]
  \label{thm:single-step}
  Let $b_2 \neq b_0$.
  \begin{enumerate}
  \item \textbf{Phase I:} If $b_0 \neq b_1$, we have the transition
    probabilities
    \begin{subequations}
      \label{eq:phase-i}
      \begin{alignat}{2}
        \label{eq:0LR}
        P(b_0, b_1-1, b_2+1 &\from b_0, b_1, b_2)
        &&= \bar\eta^2, \\
        \label{eq:0LL}
        P(b_0, b_1-1, b_2-1 &\from b_0, b_1, b_2)
        &&= \eta \bar\eta, \\
        \label{eq:0RR}
        P(b_0, b_1+1, b_2+1 &\from b_0, b_1, b_2)
        &&= \eta \bar\eta, \\
        \label{eq:0RL}
        P(b_0, b_1+1, b_2-1 &\from b_0, b_1, b_2 )
        &&= \eta^2 [b_1 \neq b_2], \\
        \label{eq:000}
        P(b_0, b_1, b_2 &\from b_0, b_1, b_2)
        &&= \eta^2 [b_1 = b_2]\,.
      \end{alignat}
    \end{subequations}
    In all other cases, $P(b'_0, b'_1, b'_2 \from b_0, b_1, b_2) = 0$.

  \item \textbf{Phase II:} If $b_0 = b_1 \neq b_2$, we have the
    transition probabilities
    \begin{subequations}
      \label{eq:phase-ii}
      \begin{alignat}{2}
        \label{eq:RRR}
        P(b_1+1, b_1+1, b_2+1 &\from b_1, b_1, b_2)
        &&= \bar\eta, \\
        \label{eq:RRL}
        P(b_1+1, b_1+1, b_2-1 &\from b_1, b_1, b_2) 
        &&= \eta [b_1 \neq b_2], \\
        \label{eq:e0trans}
        P(e_0 &\from b_1, b_1, b_2) 
        &&= \eta [b_1 = b_2]\,.
      \end{alignat}
    \end{subequations}
    In all other cases, $P(b'_1, b'_1, b'_2 \from b_1, b_1, b_2) = 0$.
  \end{enumerate}
\end{theorem}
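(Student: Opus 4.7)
The plan is to exploit the independence of cell updates in stochastic cellular automata and reduce each joint transition $\vec B_t \to \vec B_{t+1}$ to a product of at most two independent Bernoulli trials. I first classify every cell of a configuration of the form $1^*(01)^*0^*$ by its three-cell neighborhood. In the interior of the $1^*$ region the neighborhood is $(1,1,1)$ and the cell stays $1$; in the interior of the $0^*$ region the neighborhood is $(0,0,0)$ and the cell stays $0$; in the interior of the alternating $(01)^*$ region the cells have neighborhoods $(1,0,1)\mapsto 1$ and $(0,1,0)\mapsto 0$, so the whole alternating block shifts by one position; and the border cells $B_0$ and $B_0-1$ have neighborhoods $(0,1,1)\mapsto 1$ and $(0,0,1)\mapsto 0$. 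All of these updates are deterministic, so the only randomness lives at the cells near the right ends of the 1-block and of the 01-sequence.

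In Phase~I ($b_0\neq b_1$), the two random cells are $B_1$, whose neighborhood is $(1,1,0)$ (it becomes $1$ with probability $\eta$), and the cell at $B_2+1$ (or, when $b_1=b_2$, the coincident cell $B_1+1=B_2+1$), whose neighborhood is $(1,0,0)$ (it becomes $1$ with probability $\bar\eta$). Writing $X,Y$ for these two independent outcomes, I enumerate the four values of $(X,Y)$, read off the resulting configuration, and locate its new boundaries. Three of the four combinations---$(0,1)$, $(0,0)$, $(1,1)$---give the transitions \eqref{eq:0LR}, \eqref{eq:0LL}, \eqref{eq:0RR} respectively, regardless of whether $b_1=b_2$. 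The remaining combination $(X,Y)=(1,0)$, which has probability $\eta^2$, splits: for $b_1\neq b_2$ the new sequence is again of the form $1^*(01)^*0^*$ with shifted right boundaries and yields \eqref{eq:0RL}, whereas for $b_1=b_2$ the local pattern is unchanged and we obtain \eqref{eq:000}.

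Phase~II ($b_0=b_1$) is similar but simpler. The lone cell of the 1-block now has neighborhood $(0,1,0)$ and is deterministically killed, so the dynamics are driven by a single coin flip at $B_2+1$ (or $B_1+1$ when $b_1=b_2$). The outcome $Y=1$, of probability $\bar\eta$, regenerates a length-one 1-block one step to the right and gives \eqref{eq:RRR}; the outcome $Y=0$ gives either \eqref{eq:RRL} when $b_1\neq b_2$ (the 01-sequence still survives on its right end) or the collapse to $e_0$ in \eqref{eq:e0trans} when $b_1=b_2$ (no new $1$ is born). In every sub-case a one-line check confirms that the new configuration again has the form $1^*(01)^*0^*$ or equals $e_0$, so that the new boundaries are well-defined.

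The main obstacle is the bookkeeping at the degeneracies $b_1=b_2$ and, to a lesser extent, at short $0^*$- or $1^*$-regions: when $b_1=b_2$ the two random cells of Phase~I merge into one, which changes its neighborhood and redirects the $(X,Y)=(1,0)$-outcome to an unchanged configuration. The ``no wrap-around'' regime discussed just before the theorem guarantees that the $0^*$-region is long enough for the neighborhoods of $B_0-1$ and $B_2+1$ to be as claimed above; under this assumption the case analysis is exhaustive, and the ``all other cases equal $0$'' statements of the theorem follow immediately.
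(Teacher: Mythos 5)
Your proposal is correct and follows essentially the same route as the paper's proof: it identifies the (at most two) cells whose updates are stochastic in a configuration of the form $1^*(01)^*0^*$, notes that all other cells update deterministically, and enumerates the Bernoulli outcomes, splitting on $b_1 = b_2$ exactly where the paper's diagrams do. The only difference is presentational --- you classify cells by neighbourhood while the paper draws explicit configuration diagrams --- and both correctly handle the merged random cell in the degenerate sub-cases and the no-wrap-around assumption.
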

\begin{proof}
  This is done with the help of diagrams similar
  to~\eqref{eq:boundaries}.

  \begin{enumerate}
  \item \emph{Phase I:} Since the probabilities in~\eqref{eq:0RL} and
    \eqref{eq:000} depend on whether $b_1$ and $b_2$ are equal or not,
    we must subdivide Phase~I further.
    \begin{enumerate}
    \item $b_1 \neq b_2$: This is the case where there is both a
      nontrivial 1-block and a nontrivial 01-sequence. The boundaries
      $B_1$ and $B_2$ are separate and move independently to the left
      and the right.
      \begin{equation}
        \arraycolsep=0pt
        \let\e=\eta \def\be{\bar\eta}
        \def\1{\underline{1}}
        \begin{array}{cccccccc@{\qquad}l@{\quad}l}
          0&0\11&11111111&1\1 0&1010101&0\1 0&0&
          & b_0, b_1, b_2\\
          \cmidrule{1-7}
          &  011&11111111&1\e 1&0101010&10 \be&\\
          \cmidrule{2-6}
          & 0\11&11111111&\1 01&0101010&10\1&&
          & b_0, b_1-1, b_2+1, & p=\bar\eta^2 \\
          & 0\11&11111111&\1 01&0101010&\100&&
          & b_0, b_1-1, b_2-1, & p=\bar\eta \eta \\
          & 0\11&11111111&11 \1&0101010&10\1&&
          & b_0, b_1+1, b_2+1, & p=\eta \bar\eta \\
          & 0\11&11111111&11 \1&0101010&\100&&
          & b_0, b_1+1, b_2-1, & p=\eta^2
        \end{array}
      \end{equation}
      The left side of this diagram contains partial configurations.
      Only the 1-block and the 01-sequence is displayed, the
      surrounding zeros are mostly left out.

      The first line is the initial configuration, with the locations
      of $B_{0,t}$, $B_{1,t}$ and $B_{2,t}$ underlined. The second
      line contains for each cell the probability that this cell will
      be in state 1 at time $t+1$. Most of these probabilities are 0
      or 1, which means that the state of this cell is already
      determined by its neighbourhood at time $t$, but there are two
      places, $B_{1,t}$ and $B_{2,t} + 1$, where randomness enters.

      Since each of these locations may become 0 or 1, there are four
      possible configurations at time $t + 1$. In the last four lines
      of the diagram they are listed. The second column contains the
      values for the boundaries $B_{0,t+1}$, $B_{1,t+1}$ and
      $B_{2,t+1}$ in these configurations, assuming that they had the
      values $b_0$, $b_1$ and $b_2$ in the previous time step, and the
      third column has for each configuration the probability that it
      occurs.

      So in the third line of the diagram we see the first of the
      possible configurations. In it, $B_1$ has moved one place to the
      left, $B_2$ one place to the right, and $B_0$ has stayed at 0.
      This happened because the cell at $B_{1,t}$, which could stay in
      state state 1 with probability $\eta$, has switched to state
      0, and the cell at $B_{2,t} + 1$ has switched to state 1. Both
      events are independent and have probability $\bar\eta$, so
      the probability for the whole configuration is $\bar\eta^2$.
      This proves~\eqref{eq:0LR} in case of $b_1 \neq b_2$, and the
      other lines of~\eqref{eq:phase-i} are proved similarly.

      These four cases are the only possible configurations for the
      next time step, therefore all other transitions must have
      probability 0.

    \item $b_1 = b_2$: Since $b_0=0$, this configuration is a
      $\mathcal{B}(n, b_1 + 1)$ for some $n > b_1 + 1$.
      \begin{equation}
        \arraycolsep=0pt
        \let\e=\eta \def\be{\bar\eta}
        \def\1{\underline{1}}
        \def\u{\underline{\1}}
        \begin{array}{cccccc@{\qquad}l@{\quad}l}
          0&0\11&1111111111&1\u 0&0&
          & b_0, b_1, b_1\\
          \cmidrule{1-5}
          &  011&1111111111&1\e\be\\
          \cmidrule{2-4}
          & 0\11&1111111111&\10\1&&
          & b_0, b_1-1, b_1+1, & p=\bar\eta^2 \\
          & 0\11&1111111111&\u 00&&
          & b_0, b_1-1, b_1-1, & p=\bar\eta \eta \\
          & 0\11&1111111111&11\u &&
          & b_0, b_1+1, b_1+1, & p=\eta \bar\eta \\
          & 0\11&1111111111&1\u 0&&
          & b_0, b_1, b_1,     & p=\eta^2
        \end{array}
      \end{equation}
      When $B_1$ and $B_2$ have the same value, their place is marked
      by a double underlining.
    \end{enumerate}

  \item \emph{Phase II:} This is a pure 01-sequence. The left side
    always moves to the right, and the only place where randomness
    enters is the right side. Here there are only two possibilities:
    either the length of the sequence stays the same or it shrinks.

    A special case occurs when $b_0 = b_2$. Then we have a
    configuration with exactly one cell in state 1. and it may vanish
    altogether; then the classification process has finished.

    We have therefore again two cases.
    \begin{enumerate}
    \item $b_0 \neq b_2$:
      \begin{equation}
        \arraycolsep=0pt
        \let\e=\eta \def\be{\bar\eta}
        \def\1{\underline{1}}
        \def\u{\underline{\1}}
        \begin{array}{cccccc@{\qquad}l@{\quad}l}
          0&\u 0&10101010101&0\1  0&0&
          & b_0, b_0, b_2\\
          \cmidrule{1-5}
          &  01&01010101010&10 \be&\\
          \cmidrule{2-4}
          & 0\u&01010101010&10\1&&
          & b_0+1, b_0+1, b_2+1, & p=\bar\eta \\
          & 0\u&01010101010&\100&&
          & b_0+1, b_0+1, b_2-1, & p=\eta
        \end{array}
      \end{equation}

    \item $b_0 = b_2$:
      \begin{equation}
        \arraycolsep=0pt
        \let\e=\eta \def\be{\bar\eta}
        \def\1{\underline{1}}
        \def\u{\underline{\1}}
        \def\U{\underline{\u}}
        \begin{array}{cccc@{\qquad}l@{\quad}l}
          0&\U 0&0&
          & b_0, b_0, b_0\\
          \cmidrule{1-3}
          &  0 \be&\\
          \cmidrule{2-2}
          & 0\U&&
          & b_0+1, b_0+1, b_o+1, & p=\bar\eta \\
          & 00&&
          & e_0, & p=\eta
        \end{array}
      \end{equation}
    \end{enumerate}
  \end{enumerate}
\end{proof}

The theorem allows to conclude that Phase~I and Phase~II do occur in
this order in the evolution of a $\mathcal{B}(n, \ell)$. This is
because the initial configuration belongs to Phase~I and, as long as
no wraparound occurs, the successor of a Phase~I configuration belongs
either to Phase~I or Phase~II. The change between the phases can only
occur if $b_0 + 1 = b_1$, as a special case of~\eqref{eq:0LR}
or~\eqref{eq:0LL}. The transitions of Phase~II then either preserve
the property $b_0 = b_1$ that defines this phase, or they lead to the
end configuration $e_0$.

\subsection{Global Classification Quality for 1-Blocks}
\label{sec:glob-class-qual}

We must find out how useful 1-blocks are as an estimate for the
behaviour of all initial conditions.

In this section we will look at this question in terms of the global
classification quality.
\begin{figure}[ht]
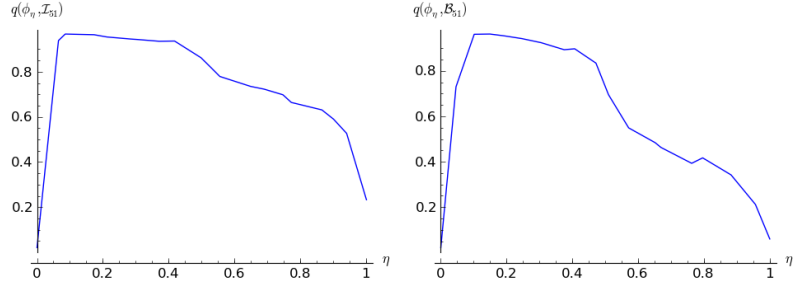

  {\centering
    \def\incgraph{\includegraphics[scale=0.43]}
    \incgraph{quality}
    \incgraph{block-quality}}
  \caption{Classification quality of the traffic-majority rule on 51
    cells for different values of $\eta$, with generic initial
    configurations (left) and with 1-blocks (right). The simulations
    were run for at most 400 time steps and repeated 1000 times for
    each value of $\eta$, each time with a random initial
    configuration.}
  \label{fig:global-recog-quality}
\end{figure}
The empirical values are shown in
Figure~\ref{fig:global-recog-quality}. We see that for $\eta <
\frac12$ the global classification quality $q(\phi_\eta,
\mathcal{I}_n)$ is very good, larger that $0.9$, and that it is almost
as good when restricted to 1-blocks. For $\eta \geq \frac12$, the
generic classification quality $q(\phi_\eta, \mathcal{I}_n)$ becomes
worse, falling almost monotonously with rising $\eta$. A qualitative
change occurs near $\eta = \frac12$, as in
Figure~\ref{fig:local-classification-quality}.

The same effect is visible for 1-block initial conditions, but it is
even stronger. For $\eta < \frac12$, the 1-blocks behave similar to
generic initial configurations, while for higher values they behave
worse. One cause for this is a theorem about \emph{archipelagos}
proved by Fatès \cite{Fat`es2011a}.

\begin{definition}[Archipelago {\cite[p.~286]{Fat`es2011a}}]
  Let $\sigma \in \Sigma$. A configuration $c \in \Conf(n)$ is a
  \emph{$\sigma$-archipelago} if all cells in state $\sigma$ are
  isolated, i.\,e.\ if there is no $x \in \Z_n$ with $c(x) = c(x + 1)
  = \sigma$.
\end{definition}
\begin{theorem}
  \label{thm:archipelagos}
  If $0 \leq \eta < 1$, then the transition rule $\phi_\eta$
  classifies all $\sigma$-archipelagos correctly.
\end{theorem}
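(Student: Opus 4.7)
My plan is to analyse the one-step behaviour of $\phi_\eta$ on a 1-archipelago, show that the archipelago property is preserved at each step, and then argue by a standard Markov-chain argument that the number of 1-cells reaches $0$ almost surely. By Lemma~\ref{thm:symmetry} the case $\sigma=0$ reduces to $\sigma=1$ after inversion, since the inverse of a $0$-archipelago is a $1$-archipelago and $\delta_0(c)<\tfrac12$ becomes $\delta_1(\bar c)<\tfrac12$; so I focus on 1-archipelagos. Any 1-archipelago has at most $\lfloor n/2\rfloor$ cells in state~$1$, with equality only for the perfect alternation $(01)^{n/2}$ on an even ring, which has no strict majority and is therefore excluded; hence we may assume $\delta_1(c)<\tfrac12$, so the correct end state is $e_0$.

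First I would prove a single-step lemma in the style of Theorem~\ref{thm:single-step}. Writing the $1$-positions as $p_1<\dots<p_k$ with gaps $g_i=p_{i+1}-p_i\ge 2$ (indices taken cyclically), a direct inspection of~\eqref{eq:fates-rule} shows that (i) every cell $p_i$ has neighbourhood $(0,1,0)$ and therefore becomes $0$ deterministically; (ii) the cell at $p_i+1$ has neighbourhood $(1,0,1)$ if $g_i=2$ and $(1,0,0)$ if $g_i>2$, so it becomes $1$ with probability $1$ in the first case and with probability $\bar\eta$ in the second, independently across $i$; and (iii) every other cell has neighbourhood in $\{(0,0,0),(0,0,1),(0,1,0)\}$ and becomes $0$. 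The new $1$-positions are therefore a random subset of $\{p_i+1:1\le i\le k\}$, which are pairwise non-adjacent since the $p_i$ were; so the successor configuration is again a 1-archipelago, or it is $e_0$.

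From this it follows that the number of $1$-cells is non-increasing along the evolution, and that it is preserved in a given step only if every $1$ with $g_i>2$ successfully moves right, an event of probability $\bar\eta^{k_1}$ where $k_1=\#\{i:g_i>2\}$. The key combinatorial observation is that whenever $\delta_1(c)<\tfrac12$ we have $\sum_i g_i = n > 2k$, so at least one gap exceeds $2$ and $k_1\ge 1$; hence the probability of a strict decrease at any step on which the number of $1$s is positive is at least $1-\bar\eta=\eta$. Provided $\eta>0$, a non-negative integer-valued process that decreases with probability at least $\eta$ at every positive step reaches $0$ in finite time almost surely, giving $q(\phi_\eta,c)=1$.

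The main subtlety I anticipate is in the single-step lemma, specifically the careful neighbourhood-by-neighbourhood verification that no new adjacency between $1$s can be created and that no unexpected $1$ appears at a position not of the form $p_i+1$. A secondary point is that the argument genuinely requires $\eta>0$: at $\eta=0$ the rule reduces to ECA~184, under which every 1-archipelago shifts rigidly by one cell per step and never reaches~$e_0$, so the theorem fails there. This strongly suggests that the stated range should read $0<\eta\le 1$ rather than $0\le\eta<1$.
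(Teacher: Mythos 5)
Your proof is correct, and it supplies something the paper itself does not: the paper's ``proof'' of this theorem is a bare citation of Fatès's Lemma~4, together with the unsubstantiated remark that the condition on $\eta$ ``follows from the proof'' there. Your route---check that one step of $\phi_\eta$ maps a $1$-archipelago to a $1$-archipelago or to $e_0$, observe that the number of $1$s is non-increasing and decreases with probability at least $1-\bar\eta^{k_1}$, use $\sum_i g_i = n > 2k$ to force $k_1\ge 1$, and finish with a geometric bound on the extinction time---is a complete, self-contained argument. I have checked the neighbourhood case analysis against \eqref{eq:fates-rule} and the independence clause in the definition of the global rule, and the reduction of $\sigma=0$ to $\sigma=1$ via Lemma~\ref{thm:symmetry}; all steps are sound. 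What your approach buys is precisely the explicit dependence on $\eta$ that the paper's citation hides.

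Your final observation is not a quibble but a genuine correction: the stated range $0\le\eta<1$ is wrong and should read $0<\eta\le 1$. At $\eta=0$ the rule is ECA~184, which conserves the number of cells in state $1$, so no configuration other than $e_0$ and $e_1$ is ever classified; in particular the archipelago $\mathcal{B}(n,1)$ drifts to the right forever, exactly as the paper's own Theorem~\ref{thm:single-step} (Phase~II with $\bar\eta=1$) predicts. At $\eta=1$ your argument applies with probability of strict decrease equal to $1$, so classification succeeds there. Note that this error propagates into the paper: the proof of Theorem~\ref{thm:deterministic-classification} invokes the archipelago theorem at $\eta=0$ to conclude that $\mathcal{B}(n,1)$ and $\mathcal{B}(n,n-1)$ are classified correctly by $\phi_0$, which is false, while those two blocks \emph{are} classified correctly by $\phi_1$; the two counts in that theorem should therefore be interchanged.
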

\begin{proof}
  This is \cite[Lemma~4]{Fat`es2011a}. The condition on $\eta$ is
  not explicitly stated there, but it follows from the proof.
\end{proof}

The intersection between the sets of archipelagos and that of 1-blocks
is very small: $\mathcal{B}(n, 0) = e_0$ and $\mathcal{B}(n, n - 1)$
are 0-archipelagos, while $\mathcal{B}(n, n) = e_1$ and
$\mathcal{B}(n, 1)$ are 1-archipelagos. All other archipelagos can
occur only as random configurations $\mathcal{I}_n$.

Now we can understand why the classification quality is higher for
random initial configurations than for blocks if $\eta > \frac12$. For
these values of $\eta$, the classification quality is rather bad, as
we have seen. However, the low classification quality of $\phi_\eta$
does not affect initial configurations that are archipelagos. This
effect influences $\mathcal{I}_n$ initial configurations with a higher
probability than block initial configurations, and therefore the
classification quality for $\mathcal{I}_n$ is better.

\subsection{Deterministic Classification}

For block initial conditions and $\eta \in \{0, 1\}$ we can show that
the only 1-blocks that are classified correctly are the four
archipelago configuration described before. For these initial
configurations we can therefore actually compute the classification
quality for $\phi_\eta$:
\begin{theorem}[Block Classification in the Deterministic Cases]
  \label{thm:deterministic-classification}
  Let $n > 1$ be odd. Then
  \begin{align}
    \label{eq:deterministic-classification}
    q(\phi_0, \mathcal{B}_n) = \frac{4}{n}
    \qqtext{and}
    q(\phi_1, \mathcal{B}_n) = \frac{2}{n}\,.
  \end{align}
\end{theorem}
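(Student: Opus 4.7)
The plan is to reduce $q(\phi_\eta, \mathcal{B}_n)$ to a sum of the quantities $q(\phi_\eta, \mathcal{B}(n, \ell))$ over $\ell \in \{0, \ldots, n\}$, each weighted by $1/(n+1)$. Because $\phi_0$ and $\phi_1$ are both deterministic, every such term is either $0$ or $1$, so the whole theorem reduces to a finite case analysis on the length $\ell$.

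For $\eta = 0$ I would observe that $\phi_0$ is the traffic rule ECA~184: inspecting the eight transition values in~\eqref{eq:fates-rule} with $\eta = 0$ shows that $\Phi_0$ satisfies the conservation law $\#_1(\Phi_0(c)) = \#_1(c)$, since each 1 either stays put or is moved one cell to the right onto a neighbor that was in state 0. Consequently $\Phi_0^t(\mathcal{B}(n, \ell))$ always has exactly $\ell$ cells in state 1, so it can coincide with $e_0$ only when $\ell = 0$ and with $e_1$ only when $\ell = n$; all intermediate blocks cycle through shifted block-like configurations and are never classified.

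For $\eta = 1$, the rule $\phi_1$ is the majority rule ECA~232. The cases $\ell \in \{0, n\}$ are immediate. For $\ell = 1$ the lone 1 has neighborhood $(0, 1, 0)$, which $\phi_1$ maps to 0, while every other cell sees only zeros; hence $\Phi_1(\mathcal{B}(n, 1)) = e_0$ in a single step. Theorem~\ref{thm:block-symmetry} then yields the symmetric result $\Phi_1(\mathcal{B}(n, n - 1)) = e_1$. For $2 \le \ell \le n - 2$ I would verify that $\mathcal{B}(n, \ell)$ is a fixed point of $\Phi_1$: cells in the interior of the 1-block and of the 0-region have majority neighborhoods agreeing with their own state, and the four ``edge'' positions $0$, $\ell - 1$, $\ell$, and $n - 1$ are checked one-by-one against the rule table.

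Substituting these per-$\ell$ values into
\[
q(\phi_\eta, \mathcal{B}_n) \;=\; \frac{1}{n + 1} \sum_{\ell = 0}^{n} q(\phi_\eta, \mathcal{B}(n, \ell))
\]
then yields the two formulas. The main obstacle is the fixed-point check for $\phi_1$ when $2 \le \ell \le n - 2$, where one must treat the wrap-around at position $n - 1$ carefully so that the rightmost 0 and the leftmost 1 are handled correctly; the $\eta = 0$ case reduces immediately to the classical density-conservation property of ECA~184.
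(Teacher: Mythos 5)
Your overall strategy is the same as the paper's: expand $q(\phi_\eta,\mathcal{B}_n)$ as the average of $q(\phi_\eta,\mathcal{B}(n,\ell))$ over the $n+1$ equally likely values of $\ell$, and settle each $\ell$ by a deterministic case analysis. Your two case analyses are also essentially sound, and they agree with the paper's own single-step transition probabilities in Theorem~\ref{thm:single-step}. The genuine gap is your last sentence: the counts you derive do \emph{not} yield the stated formulas. For $\eta=0$ you find exactly two correctly classified blocks ($\ell=0$ and $\ell=n$), and for $\eta=1$ you find four ($\ell\in\{0,1,n-1,n\}$); substituted into your own average this gives $q(\phi_0,\mathcal{B}_n)=\frac{2}{n+1}$ and $q(\phi_1,\mathcal{B}_n)=\frac{4}{n+1}$, i.e.\ the numerators $4$ and $2$ interchanged relative to~\eqref{eq:deterministic-classification}, and a denominator of $n+1$ rather than $n$. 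You cannot simply assert that the substitution ``yields the two formulas''; it visibly does not, and you needed to either locate an error in your case analysis or conclude that the statement cannot be proved as printed.

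In fact the mismatch is not of your making. The paper's proof gets four correct blocks for $\eta=0$ by invoking Theorem~\ref{thm:archipelagos} at $\eta=0$ to classify the archipelagos $\mathcal{B}(n,1)$ and $\mathcal{B}(n,n-1)$; but $\phi_0$ is ECA~184, which conserves $\#_1$ exactly (your observation), so the isolated $1$ in $\mathcal{B}(n,1)$ merely travels around the ring forever --- consistent with~\eqref{eq:RRR}, which for $\eta=0$ gives the move-right transition probability $\bar\eta=1$ --- and is never classified. Dually, the paper's Lemma~\ref{thm:eta-1} includes $\ell=1$ in its claim that no classification occurs under $\phi_1$, but~\eqref{eq:e0trans} gives $P(e_0\from b,b,b)=\eta=1$, so the single $1$ dies in one step, exactly as you compute. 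Your analysis is therefore the correct one; the defect in your write-up is only that you did not carry out the final arithmetic, and so failed to notice that your (correct) intermediate results contradict the theorem you set out to prove.
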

\begin{proof}
  We know already that $\mathcal{B}(n, 0)$, $\mathcal{B}(n, 1)$,
  $\mathcal{B}(n, n - 1)$ and $\mathcal{B}(n, n)$ are always
  classified correctly by the rule $\phi_0$. For $\phi_1$, the
  Archipelago Theorem does not apply, but $e_0$ and $e_1$ are
  trivially classified correctly. If these are the only 1-blocks that
  are classified correctly, the formulas
  in~\eqref{eq:deterministic-classification} follow. It remains
  therefore to prove that no other 1-block is classified correctly.

  We must then look at the evolution of $\mathcal{B}(n, \ell)$ for the
  values of $\ell$ for which correct classification has not been proved.
  Because of the symmetry of $\phi_\eta$
  (Theorem~\ref{thm:block-symmetry}) we only need to consider the case
  of $\ell \leq \frac{n}{2}$. In this case a correct classification
  takes place if there is a time at which the automaton is in
  configuration~$e_0$.

  The following two lemmas describe the evolution of $\mathcal{B}(n,
  \ell)$ for $\eta = 0$ and $\eta = 1$. They show that in both cases
  the configuration evolves to a state that stays essentially
  unchanged over time and is always different from $e(\mathcal{B}(n,
  \ell))$: no classification occurs, and the theorem is true.
\end{proof}

The evolution of a $\mathcal{B}(n, \ell)$ under $\eta = 0$ is shown
in the left column of Figure~\ref{fig:block_samples}.
\begin{lemma}[Block Evolution for $\eta = 0$]
  \label{thm:eta-0}
  Let $2 \leq \ell < \frac{n}{2}$. Then an evolution under $\phi_0$
  starting from $\mathcal{B}(n, \ell)$ has for $0 \leq t \leq \ell -
  1$,
  \begin{subequations}
    \begin{align}
      \label{eq:eta-0-phase-i}
      B_{0,t} &= 0,\quad B_{1,t} = \ell - 1 - t,
      & B_{2,t} &= \ell - 1 + t \\
      \intertext{and for $t \geq \ell - 1$,}
      \label{eq:eta-0-phase-ii}
      B_{0,t} &= B_{1,t} = t - \ell - 1,
      & B_{2,t} &= \ell - 1 + t\,.
    \end{align}
  \end{subequations}
  A correct classification never occurs.
\end{lemma}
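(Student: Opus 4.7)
The plan is to exploit the fact that with $\eta = 0$ the transition rule $\phi_0$ is deterministic: every probability of the form $\eta^k \bar\eta^m$ appearing in Theorem~\ref{thm:single-step} collapses to either $0$ or $1$. Specifically, in Phase~I only~\eqref{eq:0LR} survives, with $\bar\eta^2 = 1$, regardless of whether $b_1 = b_2$; in Phase~II only~\eqref{eq:RRR} survives, with $\bar\eta = 1$, regardless of whether $b_0 = b_2$. The entire boundary process therefore becomes deterministic and can be tracked by two successive inductions.

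First I would establish the base case. Reading off Definition~\ref{def:boundaries} from the initial configuration $\mathcal{B}(n, \ell) = 1^\ell 0^{n - \ell}$ gives $(B_{0,0}, B_{1,0}, B_{2,0}) = (0, \ell - 1, \ell - 1)$, which is of Phase~I type. A short induction on $t$, applying the sole surviving Phase~I transition at each step (which keeps $B_0$ fixed, decreases $B_1$ by one and increases $B_2$ by one), then yields~\eqref{eq:eta-0-phase-i} for $0 \leq t \leq \ell - 1$. At $t = \ell - 1$ the boundaries reach $(0, 0, 2\ell - 2)$, so $B_0 = B_1$ and Phase~II begins. A second induction, this time using only~\eqref{eq:RRR} (which advances all three boundaries by one), then yields~\eqref{eq:eta-0-phase-ii} for all $t \geq \ell - 1$.

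The main obstacle, and the reason the hypothesis $\ell < \frac{n}{2}$ is needed, is to verify that the evolution stays within the regime where the boundary description and Theorem~\ref{thm:single-step} apply: no wraparound should occur, and the deterministic induction should never be derailed by a circular identification $B_{0,t} = B_{2,t}$. In Phase~I the ``width'' $B_{2,t} - B_{0,t} = \ell - 1 + t$ is at most $2\ell - 2 < n - 1$ throughout, so the $1$-block and the $01$-sequence never collide. In Phase~II the gap $B_{2,t} - B_{0,t} = 2\ell - 2$ stays constant and strictly less than $n$, so $B_{0,t} \neq B_{2,t}$ in $\Z_n$ for every $t$; consequently the applicable Phase~II transition is always~\eqref{eq:RRR}, never~\eqref{eq:e0trans}.

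To conclude that no correct classification ever occurs, I would note that $\ell < \frac{n}{2}$ gives $e(\mathcal{B}(n, \ell)) = e_0$, while by Definition~\ref{def:boundaries} the configuration $\Phi_0^t(\mathcal{B}(n, \ell))$ always has a $1$ at position $B_{2,t}$. Hence $\Phi_0^t(\mathcal{B}(n, \ell)) \neq e_0$ for every $t$, so $T(\phi_0, \mathcal{B}(n, \ell)) = \infty$ and the initial configuration is never classified correctly.
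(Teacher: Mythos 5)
Your proposal follows essentially the same route as the paper's own proof: specialise Theorem~\ref{thm:single-step} to $\eta = 0$ so that only the transitions~\eqref{eq:0LR} and~\eqref{eq:RRR} survive, run two deterministic inductions for the two phases starting from $(B_{0,0},B_{1,0},B_{2,0})=(0,\ell-1,\ell-1)$, and use $\ell < \frac{n}{2}$ (with $n$ odd) to rule out a wraparound and the transition to $e_0$. It is correct; your explicit bookkeeping of the width $B_{2,t}-B_{0,t}$ is just a slightly more careful version of the paper's remark that a gap of at least two cells in state $0$ always remains.
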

\begin{proof}
  At time $t=0$ the boundaries of a $\mathcal{B}(n, \ell)$ are $B_{0,0}
  = 0$ and $B_{1,0} = B_{2,0} = \ell - 1$. We have $B_{0,0} \neq
  B_{1,0}$ because of $\ell \geq 2$ and are therefore in Phase~I.

  Among the transitions with nonzero probability in
  Theorem~\ref{thm:single-step}, only~\eqref{eq:0LR}
  and~\eqref{eq:RRR} are important here. For $\eta = 0$, they
  become
  \begin{subequations}
    \begin{alignat}{2}
      \label{eq:0LR-0}
      P(b_0, b_1-1, b_2+1 &\from \vec b)
      &&= [b_0 \neq b_1], \\
      \label{eq:RRR-0}
      P(b_1+1, b_1+1, b_2+1 &\from b_1, b_1, b_2)
      &&= [b_1 \neq b_2]\,.
    \end{alignat}
  \end{subequations}
  At the beginning, with $B_{0,0} \neq B_{1,0}$,
  transition~\eqref{eq:0LR-0} has a nonzero probability. Therefore
  $B_{0,1} = B_{0,0}$, $B_{1,1} = B_{1,0} - 1 = \ell - 2$ and $B_{2,1}
  = B_{2,0} + 1 = \ell$. So $B_1$ has moved one location to the left
  and $B_2$ has moved one location to the right. This process goes on
  for $\ell$ time steps, from $t = 0$ to $t = \ell - 1$. In all of
  them we have $B_{0,t} = 0$, $B_{1,t} = \ell - 1 - t$ and $B_{2,t} =
  \ell - 1 + t$. This proves~\eqref{eq:eta-0-phase-i}.

  At time $\ell$ we have $B_{0,\ell-1} = B_{1,\ell-1} = 0$ and
  $B_{2,\ell-1} = 2 \ell - 1$. The configuration of the automaton then
  contains a sequence $1010101\dots 01$ reaching from 0 to $2 \ell -
  1$.

  Because $\ell < \frac{n}{2}$, this structure has not yet wrapped
  around the ring of cells. In fact, since $n$ is odd, we must have $2
  \ell \leq n - 1$ and there is a gap of at least two cells in state 0
  between $B_{2,\ell-1} = 2 \ell - 1$ and $B_{0,\ell-1} = 0$.

  This becomes important for times later than $\ell - 1$. Phase~I has
  ended then, and now the process described in~\eqref{eq:RRR-0} has
  nonzero probability. This means that $B_{0,t+1} = B_{0,t} + 1$,
  $B_{1,t+1} = B_{1,t} + 1$ and $B_{2,t+1} = B_{2,t} + 1$, so the
  $01$-sequence moves with every time step one location to the right
  in the cyclic structure of the cellular automaton, a process that
  never ends. Especially the configuration $e_0$ never occurs. The
  values of $\vec B_t$ for this phase are given
  in~\eqref{eq:eta-0-phase-ii}.
\end{proof}

The evolution of a $\mathcal{B}(n, \ell)$ under $\eta = 1$ is shown
in the right column of Figure~\ref{fig:block_samples}.
\begin{lemma}[Block Evolution for $\eta = 1$]
  \label{thm:eta-1}
  Let $1 \leq \ell \leq \frac{n}{2}$. Then an evolution under $\phi_1$
  starting from $\mathcal{B}(n, \ell)$ has for all $t \geq 0$,
  \begin{equation}
    B_{0,t} = 0,\quad
    B_{1,t} = \ell - 1,\quad
    B_{2,t} = \ell - 1\,.
  \end{equation}
  A correct classification never occurs.
\end{lemma}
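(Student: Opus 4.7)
The plan is to show that the boundary tuple $\vec B_0 = (0, \ell-1, \ell-1)$ is a deterministic fixed point of the single-step dynamics under $\phi_1$, and then use induction on $t$ to conclude. Once this is established, the second claim---that no correct classification occurs---follows immediately, since the configuration remains equal to $\mathcal{B}(n, \ell)$ (which is not $e_0$ whenever $\ell \geq 1$) forever, so $T(\phi_1, \mathcal{B}(n, \ell)) = \infty$.

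The first step is to identify the right case of Theorem~\ref{thm:single-step}. From Equation~\eqref{eq:1-block} and Definition~\ref{def:boundaries} the initial boundaries are $B_{0,0} = 0$ and $B_{1,0} = B_{2,0} = \ell - 1$. For $\ell \geq 2$ we have $b_0 \neq b_1$, so the configuration lies in Phase~I, and moreover in the sub-case $b_1 = b_2$ of that phase. The assumption $\ell \leq n/2$ guarantees that no wraparound is possible while the boundaries stay in this position, so Theorem~\ref{thm:single-step} applies.

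The second step is to substitute $\eta = 1$ (hence $\bar\eta = 0$) into the Phase~I probabilities of Theorem~\ref{thm:single-step}. Equations \eqref{eq:0LR}, \eqref{eq:0LL}, and \eqref{eq:0RR} each carry at least one factor of $\bar\eta$ and therefore vanish; equation \eqref{eq:0RL} carries the indicator $[b_1 \neq b_2]$, which vanishes since $b_1 = b_2$; the only surviving transition is the self-loop \eqref{eq:000} with probability $\eta^2[b_1 = b_2] = 1$. Thus $\vec B_{t+1} = \vec B_t$ almost surely, and a one-line induction on $t$ finishes the boundary claim.

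There is no genuinely hard step. The one thing to verify carefully is the bookkeeping: that the configuration stays in the regime $1^*(01)^*0^*$ in which Theorem~\ref{thm:single-step} is applicable at every time step. This is automatic once the boundaries are shown to be constant, because then the configuration is identically $\mathcal{B}(n, \ell)$ and no wraparound can occur. (The boundary condition $\ell \geq 2$ is what separates this lemma from the archipelago case $\ell = 1$, which behaves differently under $\phi_1$ and is handled separately.)
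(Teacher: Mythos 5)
Your argument is the same as the paper's: read off the initial boundaries $\vec B_0 = (0, \ell-1, \ell-1)$, substitute $\eta = 1$ into the transition probabilities of Theorem~\ref{thm:single-step}, observe that every Phase~I transition except the self-loop~\eqref{eq:000} carries a factor $\bar\eta = 0$ or the indicator $[b_1 \neq b_2] = 0$, and conclude by induction that $\vec B_t$ is constant, so $e_0$ is never reached. The paper's proof is terser---it simply asserts that only~\eqref{eq:000} is ``important'' for $\eta = 1$---but it is the identical argument; your version is, if anything, more explicit about which case applies and why no wraparound can occur.

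One point deserves attention. You restrict to $\ell \geq 2$ and remark that $\ell = 1$ ``is handled separately,'' but the lemma as stated asserts the conclusion for all $1 \leq \ell \leq \frac{n}{2}$, and the paper does not in fact treat $\ell = 1$ elsewhere for $\eta = 1$ (Theorem~\ref{thm:archipelagos} explicitly excludes $\eta = 1$, and the proof of Theorem~\ref{thm:deterministic-classification} relies on this lemma to cover $\ell = 1$ in the $\phi_1$ case). Your instinct to exclude $\ell = 1$ is nevertheless sound: for $\ell = 1$ one has $B_{0,0} = B_{1,0} = B_{2,0} = 0$, a Phase~II configuration, and~\eqref{eq:e0trans} gives a transition to $e_0$ with probability $\eta = 1$; equivalently, $\phi_1(0,1,0) = 0$, $\phi_1(1,0,0) = \bar\eta = 0$ and $\phi_1(0,0,1) = 0$ send $\mathcal{B}(n,1)$ directly to $e_0$, which is a \emph{correct} classification. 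So the lemma's conclusion actually fails at $\ell = 1$, the paper's own proof has the same blind spot as yours, and by Theorem~\ref{thm:block-symmetry} the count in Theorem~\ref{thm:deterministic-classification} should arguably be $\frac{4}{n}$ rather than $\frac{2}{n}$. This is a defect of the statement rather than of your argument, which is complete and correct for $\ell \geq 2$.
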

\begin{proof}
  At time $t=0$, the boundaries of a $\mathcal{B}(n, \ell)$ are
  $B_{0,0} = 0$ and $B_{1,0} = B_{2,0} = \ell - 1$.

  Among the processes of Theorem~\ref{thm:single-step} with nonzero
  probability, only process~\eqref{eq:000} is important for $\eta
  = 1$; it becomes
  \begin{alignat}{2}
    \label{eq:000-1}
    P(b_0, b_1, b_2 &\from \vec b) &&= [b_1 = b_2]\,.
  \end{alignat}
  This means that $B_t$ never changes: the initial configuration stays
  the same and never becomes $e_0$.
\end{proof}

For $n = 51$, Theorem~\ref{thm:deterministic-classification} implies
that $q(\phi_0, \mathcal{B}_n) \approx 0.078$ and $q(\phi_1,
\mathcal{B}_n) \approx 0.039$. This agrees with
Figure~\ref{fig:global-recog-quality}.

\section{Approximation by Random Walks}
\label{sec:appr-rand-walks}

\subsection{The Model}
\label{sec:model}

In this section we introduce a simplified model for the behaviour of
the boundaries of the 01-block, $B_{1,t}$ and $B_{2,t}$. In the model,
we assume that they are independent random variables, each performing
a random walk. We expect it to be valid if the initial configuration
is a $\mathcal{B}(n, \ell)$ for which $\ell$ is small in comparison to
$n$ and if the transition rule $\phi_\eta$ has $\eta < \frac12$.

The model is motivated by the following lemma: It shows that if
$B_{1,t}$ and $B_{2,t}$ are distinct, they indeed behave
independently.

\begin{lemma}[Boundary Independence]
  \label{thm:boundary-independence}
  In Phase I, if $B_{0,t} \neq B_{1,t} \neq B_{2,t}$, then $B_{1,t+1}$
  and $B_{2,t+1}$ are stochastically independent with the conditional
  probabilities
  \begin{subequations}
    \label{eq:boundary-separate}
    \begin{align}
      \label{eq:b1-separate}
      \P(B_{1, t+1} = x \mid B_{1,t} = b_1)
      &= [x = b_1 - 1] \bar\eta + [x = b_1 + 1] \eta, \\
      \label{eq:b2-separate}
      \P(B_{2, t+1} = x \mid B_{2,t} = b_2)
      &= [x = b_2 - 1] \eta + [x = b_2 + 1] \bar\eta\,.
    \end{align}
  \end{subequations}
  In Phase II, if $B_{1,t} \neq B_{2,t}$,
  equation~\eqref{eq:b2-separate} is valid too.
\end{lemma}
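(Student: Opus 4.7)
The plan is to reduce everything to Theorem~\ref{thm:single-step} and then check, by direct inspection of the resulting $2\times 2$ joint table, that it factors as the product of the proposed marginals.

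First I would handle Phase~I. Under the hypothesis $B_{0,t} \neq B_{1,t} \neq B_{2,t}$, together with the standing no-wrap-around assumption (which forces $B_{0,t} \neq B_{2,t}$ as well), the case $b_1 \neq b_2$ of Theorem~\ref{thm:single-step}(1) applies, so only the four transitions \eqref{eq:0LR}--\eqref{eq:0RL} have nonzero probability and they all leave $B_{0,t+1} = b_0$ fixed. Writing the joint distribution of $(B_{1,t+1}, B_{2,t+1})$ as a $2 \times 2$ table indexed by $\{b_1 - 1, b_1 + 1\} \times \{b_2 + 1, b_2 - 1\}$, the entries are $\bar\eta^2$, $\eta\bar\eta$, $\eta\bar\eta$, $\eta^2$. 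Summing rows gives the marginal of $B_{1,t+1}$, namely $\bar\eta$ at $b_1 - 1$ and $\eta$ at $b_1 + 1$, matching \eqref{eq:b1-separate}; summing columns gives $\bar\eta$ at $b_2 + 1$ and $\eta$ at $b_2 - 1$, matching \eqref{eq:b2-separate}. Since every joint entry equals the corresponding product of marginals (e.g.\ $\bar\eta^2 = \bar\eta \cdot \bar\eta$, $\eta^2 = \eta \cdot \eta$, and the two cross terms coincide), the two variables are independent.

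For Phase~II with $b_0 = b_1 \neq b_2$ I would invoke the case $b_1 \neq b_2$ of Theorem~\ref{thm:single-step}(2): only \eqref{eq:RRR} and \eqref{eq:RRL} have nonzero probability, and they give $B_{2,t+1} = b_2 + 1$ with probability $\bar\eta$ and $B_{2,t+1} = b_2 - 1$ with probability $\eta$, which is exactly \eqref{eq:b2-separate}. (The independence statement is vacuous here since $B_{1,t+1} = b_1 + 1$ deterministically.)

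The main obstacle is really conceptual rather than technical: one has to be comfortable with the fact that, even though the boundaries live on the same cyclic grid and are defined jointly through the transition rule, under the no-wrap-around assumption the randomness at site $b_1$ (which governs the motion of $B_1$) and the randomness at site $b_2 + 1$ (which governs the motion of $B_2$) are applied at distinct cells and are therefore stochastically independent by the very definition of a stochastic local transition rule. Once this is recognised, the proof is the short table-verification sketched above.
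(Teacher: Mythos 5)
Your proposal is correct and follows essentially the same route as the paper: both reduce to Theorem~\ref{thm:single-step} and sum the four Phase~I transition probabilities to obtain the marginals, with the Phase~II case handled by the two transitions \eqref{eq:RRR} and \eqref{eq:RRL}. If anything, your version is slightly more complete, since you explicitly verify that each joint entry factors as the product of the marginals (the paper only computes the marginals and leaves the independence check implicit), and you correctly flag that the hypothesis $b_0 \neq b_2$ of Theorem~\ref{thm:single-step} comes from the no-wrap-around assumption rather than from the stated condition $B_{0,t} \neq B_{1,t} \neq B_{2,t}$ alone.
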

\begin{proof}
  Let $B_{i,t} = b_i$ for $i = 0, 1, 2$. Then, in Phase I,
  \begin{subequations}
    \begin{align}
      \P(B_{1,t+1} = b_1 - 1 \mid \vec B_t = \vec b)
      &= \sum_{b'_0 \in \Z} \sum_{b'_2 \in \Z}
      P(b'_0, b_1 - 1, b'_2 \from \vec b) \notag \\
      &= P(b_0, b_1 - 1, b_2 - 1 \from \vec b) \notag \\
      &\quad + P(b_0, b_1 - 1, b_2 + 1 \from \vec b) \notag  \\
      & = \bar\eta^2 + \eta \bar\eta
      = \bar\eta, \displaybreak[0] \\
      \P(B_{1,t+1} = b_1 + 1 \mid \vec B_t = \vec b)
      &= \sum_{b'_0 \in \Z} \sum_{b'_2 \in \Z}
      P(b'_0, b_1 + 1, b'_2 \from \vec b) \notag \\
      &= P(b_0, b_1 + 1, b_2 - 1 \from \vec b) \notag \\
      &\quad + P(b_0, b_1 + 1, b_2 + 1 \from \vec b) \notag  \\
      & = \eta \bar\eta + \eta^2
      = \eta,
    \end{align}
  \end{subequations}
  which proves~\eqref{eq:b1-separate}. In a similar way we
  prove~\eqref{eq:b2-separate} with the calculations
  \begin{subequations}
    \begin{align}
      \P(B_{2,t+1} = b_2 - 1 \mid \vec B_t = \vec b)
      &= P(b_0, b_1 - 1, b_2 - 1 \from \vec b) \notag \\
      &\quad + P(b_0, b_1 + 1, b_2 - 1 \from \vec b) \notag \\
      & = \bar\eta \eta + \eta^2
      = \eta, \displaybreak[0] \\
      \P(B_{2,t+1} = b_2 + 1 \mid \vec B_t = \vec b)
      &= P(b_0, b_1 - 1, b_2 + 1 \from \vec b) \notag \\
      &\quad + P(b_0, b_1 + 1, b_2 + 1 \from \vec b) \notag  \\
      & = \bar\eta^2 + \eta \bar\eta
      = \bar\eta\,.
    \end{align}
  \end{subequations}
  In Phase II, we must apply equations~\eqref{eq:phase-ii}. Here we
  have in both cases only one summand, and the computations are
  \begin{subequations}
    \begin{align}
      \P(B_{2,t+1} = b_2 - 1 &\mid \vec B_t = \vec b) \notag \\
      &= P(b_1 + 1, b_1 + 1, b_2 - 1 \from b_1, b_1, b_2)
      = \eta, \\
      \P(B_{2,t+1} = b_2 + 1 &\mid \vec B_t = \vec b) \notag \\
      &= P(b_1 + 1, b_1 + 1, b_2 + 1 \from b_1, b_1, b_2)
      = \bar\eta\,.
    \end{align}
  \end{subequations}
\end{proof}

Now a few words about the heuristic justification for the model: We
can see from the examples in the second column of
Figure~\ref{fig:block_samples} that, when the boundaries of the
01-block have separated, they stay so for a long time. This is an
experimental justification for the model. A more convincing argument
uses equations~\eqref{eq:boundary-separate}. They show that $B_{1,t}$
tends to move to the left and $B_{2,t}$ to the right if $\eta <
\frac12$. Their movements are symmetric, which means that when
$B_{1,t} = 0$ and Phase~I ends, $B_{2, t}$ must have moved
approximately $\ell$ positions to the right. So if $\ell$ is much
smaller than $\frac{n}{2}$, there is only a low probability that a
wraparound occurs, the other condition that this model could break
down. Therefore it is a good approximation for Phase~I if $\eta$ and
$\ell$ are small.

There is no danger of a wraparound in Phase~II, so the model is also
applicable to it.

\subsection{Random Walks}

We will now introduce some terminology for random walks. The random
walks used here may have an end, and this becomes part of their
definition. We will also need to specify the way that the random
walker has taken, something I have called here its \emph{path}.
\begin{definition}[Paths and Random Walks]
  A \emph{path} through $\Z$ is a pair $(s_x, x)$ with $s_x \in \N_0
  \cup \{ \infty \}$ and $x = (x_t)_{0 \leq t \leq s_x}$ a sequence of
  elements of $\Z$. The number $s_x$ is the \emph{stopping time} of
  $x$. We will usually speak of ``the path $x$'' when meaning $(s_x,
  x)$. The set of all paths is $\Pi$.

  A \emph{bounded random walk} is a random variable with values in
  $\Pi$.

  Let $p \in [0, 1]$. A \emph{$p$-walk} $(s_X, X) \in \Pi^\Omega$ is
  a bounded random walk with the property that for all $t$ with $0
  \leq t \leq s_x$,
  \begin{subequations}
    \label{eq:p-walk}
    \begin{align}
      \P(X_{t+1} = x + 1 \mid X_t = x) &= p, \\
      \P(X_{t+1} = x - 1 \mid X_t = x) &= \bar p\,.
    \end{align}
  \end{subequations}
\end{definition}

We will now translate the movement of the terms $B_{i,t}$ into the
language of bounded random walks. This is necessary because we will
later use results about random walks in $\Z$, but the boundary terms
have values in $\Z_n$. The stopping times of the new random walks will
also have a useful interpretation.

In order to simplify the computations below, we now artificially
extend the domain of $B_{2,t}$ to the first time step at which the
automaton is in state $e_0$. For this, let $t_e$ be the first time at
which the automaton is in state $e_0$. We will then set $B_{2, t_e} =
B_{2, t_e-1} - 1$. We assume thus that at the end of Phase~II, the
boundary $B_2$ takes one final step to the left.
\begin{definition}[Boundary Walks]
  \label{def:boundary-walks}
  Let $1 \leq \ell \leq \frac{n}{2}$.

  Let $(C_t)_{t \geq 0} = (\Phi_\eta^t(\mathcal{B}(n, \ell)))_{t
    \geq 0}$ be the evolution of a 1-block.

  Let $B^\Z_1$ and $B^\Z_2$ be the bounded random walks with the
  properties:
  \begin{enumerate}
  \item The stopping time $s_{B^\Z_1}$ is the first time at which
    $B_{1,t} = 0$, while $s_{B^\Z_2} = T(\phi_\eta, \mathcal{B}(n,
    \ell))$.

    If however a wraparound occurs during Phase~I, i.\,e.\ if there is
    a time $t$ with $B_{0,t} \neq B_{1,t}$ and $B_{2,t} = n - 2$, then
    $s_{B^\Z_1} = s_{B^\Z_2} = 0$.

  \item For $i = 1$, $2$ we have
    \begin{subequations}
      \begin{align}
        B^\Z_{i, 0} &= B_{i, 0}, \\
        B^\Z_{i,t} \bmod n &= B_{i, t}
        &&\text{for $0 \leq t \leq s_{B^\Z_i}$,} \\
        \label{eq:bz-diff}
        \abs{B^\Z_{i,t+1} - B^\Z_{i,t}} &\leq 1
        &&\text{for $0 \leq t < s_{B^\Z_i}$.}
      \end{align}
    \end{subequations}
  \end{enumerate}
  Then the \emph{boundary walks} for $\mathcal{B}(n, \ell)$ are the
  pair $\vec B^\Z = (B^\Z_1, B^\Z_2)$ of bounded random walks.
\end{definition}
We know from Theorem~\ref{thm:single-step} that~\eqref{eq:bz-diff} can
always be fulfilled, therefore $\vec B^\Z$ exists for every
$\mathcal{B}(n, \ell)$. The definition of the stopping times means
that $s_{B^\Z_1}$ is the first time step after Phase~I and
$s_{B^\Z_2}$ is the first time step after Phase~II.

The following lemma shows another way to characterise $s_{B^\Z_2}$. It
will be needed for the simplified model.
\begin{lemma}[Stopping Time]
  \label{thm:stopping-time}
  The stopping time $s_{B^\Z_2}$ is the smallest value of $t$ for
  which $t - B^\Z_{2,t} = s_{B^\Z_1} + 2$.
\end{lemma}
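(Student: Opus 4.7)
The plan is to split the claim into two parts: (a) the identity $s_{B^\Z_2} - B^\Z_{2, s_{B^\Z_2}} = s_{B^\Z_1} + 2$ holds, and (b) for every earlier $t$ with $0 \leq t < s_{B^\Z_2}$ one has $t - B^\Z_{2, t} \leq s_{B^\Z_1}$, hence strictly less than $s_{B^\Z_1} + 2$. Together these give the minimality of $s_{B^\Z_2}$.

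For part~(a), I would exploit that in Phase~II the left boundary moves deterministically. Theorem~\ref{thm:single-step}(2) shows that whenever $b_0 = b_1$, the next step forces $b'_0 = b'_1 = b_1 + 1$, regardless of whether $B^\Z_2$ steps left or right. Combined with $B^\Z_{1, s_{B^\Z_1}} = 0$ this yields $B^\Z_{1, t} = t - s_{B^\Z_1}$ for all $s_{B^\Z_1} \leq t \leq s_{B^\Z_2} - 1$. The same theorem also tells us that the transition into $e_0$ is possible only from the single-cell configuration $b_1 = b_2$, so at $t = s_{B^\Z_2} - 1$ we must have $B^\Z_{2, t} = B^\Z_{1, t} = s_{B^\Z_2} - 1 - s_{B^\Z_1}$. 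The artificial convention $B^\Z_{2, s_{B^\Z_2}} = B^\Z_{2, s_{B^\Z_2} - 1} - 1$ from Definition~\ref{def:boundary-walks} then gives $s_{B^\Z_2} - B^\Z_{2, s_{B^\Z_2}} = s_{B^\Z_1} + 2$ by direct substitution.

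For part~(b), I would bound $B^\Z_{2, t}$ from below phase by phase. An inspection of the transition diagrams in Theorem~\ref{thm:single-step} shows that in every Phase~I or Phase~II step the difference $b'_2 - b'_1$ changes from $b_2 - b_1$ by an amount in $\{-2, 0, +2\}$ and is never driven strictly below~$0$; since $B^\Z_{2, 0} = B^\Z_{1, 0} = \ell - 1$, a short induction gives $B^\Z_{2, t} \geq B^\Z_{1, t}$ for all $0 \leq t < s_{B^\Z_2}$. In Phase~I ($t \leq s_{B^\Z_1}$), the definition of $s_{B^\Z_1}$ as the first hitting time of $0$ by $B^\Z_1$ together with the no-wraparound assumption gives $B^\Z_{1, t} \geq 0$, so $B^\Z_{2, t} \geq 0$ and $t - B^\Z_{2, t} \leq t \leq s_{B^\Z_1}$. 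In Phase~II ($s_{B^\Z_1} < t < s_{B^\Z_2}$), the formula $B^\Z_{1, t} = t - s_{B^\Z_1}$ from part~(a) gives $t - B^\Z_{2, t} \leq t - B^\Z_{1, t} = s_{B^\Z_1}$. In both cases $t - B^\Z_{2, t} \leq s_{B^\Z_1} < s_{B^\Z_1} + 2$, so $s_{B^\Z_2}$ is indeed the first $t$ at which the claimed equality holds. I do not anticipate any deep obstacle; the only delicate point is the bookkeeping at the phase boundary and at the artificial terminal step, which is precisely what makes the identity collapse to the clean value $s_{B^\Z_1} + 2$ rather than involving further random parameters from the Phase~II duration.
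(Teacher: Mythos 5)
Your proposal is correct and follows essentially the same route as the paper: the paper likewise establishes the identity at $t_e = s_{B^\Z_2}$ from the single-cell terminal configuration plus the artificial final step $B^\Z_{2,t_e} = B^\Z_{2,t_e-1}-1$, and obtains minimality from the bound $\max\{0, t - s_{B^\Z_1}\} \leq B^\Z_{2,t}$ (it phrases this via the left boundary $B_{0,t}$, which coincides with your $B_{1,t}$ throughout Phase~II, where you should strictly speak of the underlying boundary rather than the stopped walk $B^\Z_1$). Your explicit induction on $b_2 - b_1$ is just a more detailed justification of the paper's one-line assertion that $B_0$ stays to the left of $B_2$.
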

\begin{proof}
  Let $t_e = s_{B^\Z_2}$ be the end time of the classification.
  
  Let $\beta(t) = \max\{0, t - s_{B^\Z_1} \}$ be the number of time
  steps since the end of Phase~I. During Phase~I, $B_0$ stays at
  location 0, while at every time step of Phase~II, $B_0$ moves one
  location to the right. Therefore $B_{0,t} = \beta(t) \bmod n $ for
  all $t < t_e$.

  Since $B_{0,t}$ is always at the left of $B_{2,t}$, we must have
  $\beta(t) \leq B^\Z_{2,t}$ for all $t < t_e$. At time $t_e - 1$, the
  configuration consists of exactly one cell in state 1, therefore
  $\beta(t_e - 1) = B^\Z_{2,t_e-1}$. Since $\beta(t_e) = \beta(t_e -
  1) + 1$ and $B^\Z_{2,t_e} = B^\Z_{2,t_e-1} - 1$, we must have
  $\beta(t_e) - 1 = B^\Z_{2,t_e} + 1$. On the other hand, $t_e >
  s_{B^\Z_1}$ and therefore $\beta(t_e) = t - s_{B^\Z_1}$, so we must
  have $t - s_{B^\Z_1} - 1 = B^\Z_{2,t_e} + 1$, which proves the
  lemma.
\end{proof}

Now, to express the content of Lemma~\ref{thm:boundary-independence}
in the language of $p$-walks, we introduce a definition that expresses
the restrictions used in that theorem.
\begin{definition}
  \label{def:non-crossing}
  Let $N \in \Pi^2$ the set of \emph{non-crossing path pairs},
  \begin{multline}
    \label{eq:non-crossing}
    N = \set{ (x_1, x_2) \in \Pi^2 :
      s_{x_1} \leq s_{x_2}, \\
      \forall t \in \{ 1, \dots, s_{x_1} \} \colon
      0 \leq x_{1,t} < x_{2,t} \leq n - 2}
  \end{multline}
\end{definition}
Then we can say that if $\vec B^\Z \in N$, then $B^\Z_1$ has the
transition probabilities of an $\eta$-walk and $B^\Z_2$ has the
transition probabilities of an $\bar\eta$-walk, i.\,e.
\begin{subequations}
  \begin{align}
    \label{eq:bz1-separate}
    \P(B^\Z_{1, t+1} = x \mid B^\Z_{1,t} = b_1)
    &= [x = b_1 - 1] \bar\eta + [x = b_1 + 1] \eta, \\
    \label{eq:bz2-separate}
    \P(B^\Z_{2, t+1} = x \mid B^\Z_{2,t} = b_2)
    &= [x = b_2 - 1] \eta + [x = b_2 + 1] \bar\eta\,.
  \end{align}
\end{subequations}
This is Lemma~\ref{thm:boundary-independence}, translated into the
language of bounded random walks.

The simplified model then consists only of $p$-walks.
\begin{definition}[Approximation by Random Walks]
  \label{def:approximation}
  Let $\ell \in \N$ and $A_1$, $A_2 \in \Pi^\Omega$ be two bounded
  random walks that start at $\ell - 1$.

  Let $A_1$ be an $\eta$-walk that ends at the first time $t$ with
  $A_{1,t} = 0$.

  Let $A_2$ be an $\bar\eta$-walk that ends at the first time $t$
  with $t - A_{2,t} = s_{A_1} + 2$.

  Then the pair $\vec A = (A_1, A_2)$ is the \emph{approximation} for
  $\vec B$ with block length~$\ell$.
\end{definition}
In this model, $A_1$ simulates $B^\Z_1$ during Phase~I and at the
first time step of Phase~II, and $A_2$ simulates $B^\Z_2$ during both
phases and the additional time step added for
Definition~\ref{def:boundary-walks}. Therefore $s_{A_1}$ stands for
the time Phase~II begins and $s_{A_2}$ for the classification time.
The characterisation of $s_{A_2}$ follows from
Lemma~\ref{thm:stopping-time}.

To justify the definition we note that if $\vec x \in N$, then
$\P(\vec A = \vec x) = \P(\vec B^\Z = \vec x)$: every step in $x$ has
the same probability under $\vec A$ as under $\vec B^\Z$. So the
simplified model and $\vec B^\Z$ assign the same probabilities to
paths in $N$. They may however differ on the rest of the paths.

To investigate the effect of the approximation to the classification
problem we now look at $\P(s_{B_2} = t_2)$, the probability that a
classification required exactly $t_2$ steps. If we know this
probability for all $t_2$, we know $T(\phi_\eta, \mathcal{B}(n,
\ell))$ and therefore also $q(\phi_\eta, \mathcal{B}(n, \ell))$. In
the unapproximated case we can split the probability in the following
way,
\begin{align}
  \P(s_{B_2} = t_2)
  &= \P(s_{B_2} = t_2, \vec B^\Z \in N)
  + \P(s_{B_2} = t_2, \vec B^\Z \notin N) \\
\intertext{and we will do the same thing for the approximation,}
  \P(s_{A_2} = t_2)
  &= \P(s_{A_2} = t_2, \vec A \in N)
  + \P(s_{A_2} = t_2, \vec A \notin N)\,.
\end{align}
A measure for the quality of the approximation is the difference
between the two probabilities. We know that $\P(s_{B_2} = t_2, \vec
B^\Z \in N) = \P(s_{A_2} = t_2, \vec A \in N)$, because there are
again only path pairs involved with the same probability under $\vec
B^\Z$ and under $\vec A$, and therefore
\begin{multline}
  \label{eq:approx-quality}
  \abs{\P(s_{B_2} = t_2) - \P(s_{A_2} = t_2)} \\
  = \abs{\P(s_{B_2} = t_2, \vec B^\Z \notin N) -
    \P(s_{A_2} = t_2, \vec A \notin N)}\,.
\end{multline}
If this difference is small, then the approximation of $\vec B^\Z$ by
$\vec A$ is good.

We know from the proof of Lemma~\ref{thm:eta-0} that for $\eta
= 0$, all possible values of $\vec B^\Z$ are elements of $N$.
Therefore the difference~\eqref{eq:approx-quality} is 0 in this case.
Since the probability $\P(s_{B^\Z_2} = t_2, \vec B^\Z \in N)$ varies
continuously with $\eta$, we can expect that the approximation
becomes better as $\eta$ approaches 0. The probability for a
wraparound during Phase~I becomes smaller when $\ell$ is small in
comparison to $n$. So we should expect a good approximation if
$\eta \approx 0$ and $\frac{\ell}{n}$ is not too large.

What these vague conditions actually mean, we will now find out by
experiment. We will now derive the classification quality and time for
the approximation and then compare it with empirical values.

\subsection{Generating Functions}
\label{sec:generating-functions}

We will use generating functions \cite[Definition 5.1.8]{Grimmett1989}
for the random walk computations.
\begin{definition}[Generating Function]
  Let $X \in (\N_0 \cup \{ \infty \})^\Omega$ be a random variable.
  The \emph{generating\/ function} for $X$ is the function
  \begin{equation}
    \label{eq:generating-function}
    G_X(s) = \E(s^X) = \sum_{k \geq 0} \P(X = k) s^k\,.
  \end{equation}
\end{definition}

Generating functions can be used to find the expected values of random
variables.
\begin{theorem}[Properties of Generating Functions]
  Let $X \in (\N_0 \cup \{ \infty \})^\Omega$ be a random variable.
  Then
  \begin{equation}
    \label{eq:g(1)}
    G_X(1) = \P(X \neq \infty)\,.
  \end{equation}
  If $G_X(1) = 1$, then
  \begin{equation}
    \label{eq:g-expected-value}
    G_X'(1) = \E(X).
  \end{equation}
\end{theorem}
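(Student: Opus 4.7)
The plan is to unpack the definition of $G_X$ and evaluate it, and its derivative, at $s = 1$, while being careful that $s = 1$ lies on the boundary of the interval of convergence.

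For \eqref{eq:g(1)}, the idea is simply to substitute $s = 1$ into~\eqref{eq:generating-function}. This yields
\begin{equation*}
  G_X(1) = \sum_{k \geq 0} \P(X = k).
\end{equation*}
Since $X$ takes values in $\N_0 \cup \{\infty\}$, the events $\{X = k\}$ for $k \in \N_0$ together with $\{X = \infty\}$ form a partition of $\Omega$, so their probabilities sum to $1$. Hence $\sum_{k \geq 0} \P(X = k) = 1 - \P(X = \infty) = \P(X \neq \infty)$.

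For \eqref{eq:g-expected-value}, I would first differentiate the power series term by term on the open interval $(-1, 1)$, obtaining
\begin{equation*}
  G_X'(s) = \sum_{k \geq 1} k \, \P(X = k)\, s^{k-1}
  \qquad \text{for $|s| < 1$.}
\end{equation*}
The assumption $G_X(1) = 1$ gives $\P(X = \infty) = 0$ by the first part, so the expectation simplifies to $\E(X) = \sum_{k \geq 1} k \, \P(X = k)$, a sum of non-negative terms (possibly $+\infty$). The value $G_X'(1)$ is then understood as $\lim_{s \to 1^-} G_X'(s)$.

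The main obstacle is justifying this passage to the limit at the boundary of convergence. Since all coefficients $k \, \P(X = k)$ are non-negative, both Abel's theorem for power series and the monotone convergence theorem apply: as $s \uparrow 1$, the partial sums $\sum_{k = 1}^{N} k \, \P(X = k)\, s^{k-1}$ increase in $s$ for each fixed $N$, so interchanging $\lim_{s \to 1^-}$ and $\sum_k$ is legitimate. This gives $G_X'(1) = \sum_{k \geq 1} k \, \P(X = k) = \E(X)$, as required. No further machinery is needed, since the non-negativity of the coefficients circumvents any absolute-convergence concerns.
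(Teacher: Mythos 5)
Your argument is correct, but it is worth noting that the paper does not actually prove this theorem at all: its ``proof'' consists of citing Theorem 5.1.10 and formula (5.1.20) of Grimmett and Stirzaker, so any self-contained derivation is necessarily a different route. What you supply is essentially the standard textbook argument that the citation points to. Part one is immediate, as you say: substituting $s=1$ turns the series into $\sum_{k\geq 0}\P(X=k)$, and since the events $\{X=k\}$ for $k\in\N_0$ together with $\{X=\infty\}$ partition the sample space, this sum equals $1-\P(X=\infty)=\P(X\neq\infty)$. For part two, your care about the boundary point $s=1$ is exactly the right concern, and your resolution is sound: with non-negative coefficients $k\,\P(X=k)$, the function $s\mapsto\sum_{k\geq 1}k\,\P(X=k)s^{k-1}$ is non-decreasing on $[0,1)$, and monotone convergence (equivalently, Abel's theorem in the convergent case plus a trivial lower-bound argument in the divergent case) gives
\begin{equation*}
  \lim_{s\to 1^-} G_X'(s)=\sum_{k\geq 1}k\,\P(X=k)=\E(X),
\end{equation*}
where the last equality uses $\P(X=\infty)=0$, which follows from the hypothesis $G_X(1)=1$ via part one. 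The only caveat is one you already flag: the statement $G_X'(1)=\E(X)$ must be read with $G_X'(1)$ interpreted as this left-hand limit (or as the sum of the termwise-differentiated series at $s=1$), since the radius of convergence may be exactly $1$; this is the same convention the cited reference adopts. In short, your proposal fills in the proof the paper outsources, and does so correctly; what the paper's approach buys is brevity, while yours buys self-containedness at essentially no extra cost.
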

\begin{proof}
  These are Theorem 5.1.10 and formula (5.1.20) of
  \cite{Grimmett1989}.
\end{proof}
In the context of stochastic processes with discrete time, $G_X(1)$
has a more concrete interpretation. Here, ``$X = k$'' is interpreted
as, ``Event $X$ happens at time $k$'', and $X$ is an event that occurs
at most once in the duration of the process. If event~$X$ never
happens, the random variable $X$ has the value $\infty$, and therefore
$G_X(1)$ is the probability that event $X$ happens at all.

The generating functions that we will use here will have other
parameters beside $s$, but the derivatives we will need will be always
with respect to $s$; therefore the following convention will make
formulas easier to read:
\begin{definition}[Convention about Derivatives]
  In a generating function with several parameters, like $F_{-1}(p,
  s)$ below, the derivative $F_{-1}'$ is always with respect to $s$:
  We have $F_{-1}'(p, s) = \frac{d}{ds} F_{-1}(p, s)$.
\end{definition}

Another method to avoid a cumbersome notation is the use of special
names for the generating functions for $s_{A_1}$ and $s_{A_2}$.
\begin{definition}[Generating Functions for the Ends of Phase I and
  II]
  Let $\vec A$ be an approximation with block length $\ell$.

  Then $G_I(\ell, s)$ is the generating function for $s_{A_1}$ and
  $G_{II}(\ell, s)$ is the generating function for~$s_{A_2}$.
\end{definition}

\subsection{Computation of the Expected End Time}

Our next task will be to find these functions. $G_I$ is actually
well-known:
\begin{lemma}[Reaching 0]
  The generating function for the first time $t$ at which a $p$-walk
  starting from 0 reaches $X_t = 0$ again is
  \begin{equation}
    \label{eq:p-walk-back-0}
    F_0(p, s) = 1 - \sqrt{ 1 - 4 p \bar p s^2 }\,.
  \end{equation}
  The generating function for the first time a $p$-walk starting from
  1 reaches 0 is
  \begin{equation}
    \label{eq:p-walk-back-1}
    F_{-1}(p, s) = \frac{F_0(p, s)}{2 p s}\,.
  \end{equation}
  The generating function for the first time a $p$-walk starting from
  $n \in \N$ reaches 0 is
  \begin{equation}
    \label{eq:p-walk-back-n}
    F_{-n}(p, s) = F_{-1}(p, s)^n\,.
  \end{equation}
\end{lemma}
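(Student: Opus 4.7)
The plan is to derive $F_{-1}$ first by a first-step analysis, then obtain $F_0$ as a corollary (and verify the closed form by an independent argument), and finally to get $F_{-n}$ from the strong Markov property.

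First I fix the starting point at $1$ and condition on the first step. With probability $\bar p$ the walker steps to $0$ in a single time unit, contributing $\bar p s$ to the generating function. With probability $p$ it steps to $2$, and by the strong Markov property the subsequent first passage from $2$ to $0$ decomposes into independent first passages $2 \to 1$ and $1 \to 0$, each distributed as the first passage from $1$ to $0$. This yields the functional equation
\begin{equation*}
F_{-1}(p,s) = \bar p s + p s\, F_{-1}(p,s)^2.
\end{equation*}
Solving this quadratic in $F_{-1}$ and choosing the branch with $F_{-1}(p,0) = 0$ (a GF of a positive-integer-valued random variable must be a power series in $s$ with no constant term) gives
\begin{equation*}
F_{-1}(p,s) = \frac{1 - \sqrt{1 - 4 p \bar p s^2}}{2 p s},
\end{equation*}
which immediately rearranges to $2 p s\, F_{-1}(p,s) = 1 - \sqrt{1 - 4 p \bar p s^2}$, the closed form claimed in~\eqref{eq:p-walk-back-0} and the identity~\eqref{eq:p-walk-back-1}.

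To confirm that this expression is indeed the generating function for the first return to $0$, I repeat the first-step argument starting at the origin. Stepping right (probability $p$) lands the walker at $1$, and its first passage back to $0$ has GF $F_{-1}(p,s)$; stepping left (probability $\bar p$) lands it at $-1$, and by the left--right symmetry of the walk the first passage $-1 \to 0$ has the same distribution as the first passage $1 \to 0$ for a walk with $p$ and $\bar p$ interchanged, i.e.\ GF $F_{-1}(\bar p, s)$. Hence
\begin{equation*}
F_0(p,s) = p s\, F_{-1}(p,s) + \bar p s\, F_{-1}(\bar p, s),
\end{equation*}
and plugging in the closed form collapses both summands to $\tfrac12\bigl(1 - \sqrt{1 - 4 p \bar p s^2}\bigr)$, whose sum reproduces~\eqref{eq:p-walk-back-0}.

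For the third assertion, a $p$-walk starting at $n$ moves in unit steps, so every path that reaches $0$ must hit $n-1, n-2, \dots, 1$ in that order. The strong Markov property at each of these hitting times, together with translation invariance of the walk, makes the successive first-passage times between adjacent integers independent and identically distributed as the first passage from $1$ to $0$. Since the generating function of a sum of independent non-negative integer random variables is the product of their generating functions, $F_{-n}(p,s) = F_{-1}(p,s)^n$, which is~\eqref{eq:p-walk-back-n}. The only delicate point in the whole argument is the branch choice in the quadratic for $F_{-1}$; beyond that, the derivation is a routine application of first-step analysis and the strong Markov property, and I do not expect any serious obstacle.
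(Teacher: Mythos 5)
Your proof is correct, but it takes a different route from the paper in the sense that the paper does not actually prove this lemma at all: it simply cites Theorems 5.3.1 and 5.3.5 of Grimmett and Stirzaker, together with the reflection identity $F_{-n}(p,s) = F_n(\bar p, s)$ to convert their upward first-passage formulas into the downward ones needed here. What you have written is essentially the standard derivation that underlies those cited theorems: the first-step functional equation $F_{-1} = \bar p s + p s\, F_{-1}^2$ with the branch fixed by $F_{-1}(p,0)=0$, the decomposition $F_0 = p s\, F_{-1}(p,s) + \bar p s\, F_{-1}(\bar p,s)$, and the strong-Markov/translation-invariance factorisation $F_{-n} = F_{-1}^n$ via the intermediate levels $n-1,\dots,1$. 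All three steps are sound, and you correctly flag the only delicate point (the choice of root of the quadratic). The trade-off is the usual one: your version is self-contained and makes the probabilistic structure explicit, at the cost of a page of routine argument; the paper's version is a one-line citation that keeps the focus on the cellular-automaton application.
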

\begin{proof}
  This are Theorems 5.3.1 and 5.3.5 of \cite{Grimmett1989}. Instead
  of~\eqref{eq:p-walk-back-1} and~\eqref{eq:p-walk-back-n} they have
  formulas for $F_1$ and $F_n$; but the formulas here can be easily
  derived by noting that $F_{-n}(p, s) = F_n(\bar p, s)$.
\end{proof}

The following lemma helps us to compute properties of $F_{-1}$, which
in turn will be used for~$G_{II}$.
\begin{lemma}[Values of $F_{-1}$]
  \begin{subequations}
    \begin{align}
      \label{eq:f-1-1}
      F_{-1}(p, 1) &= \frac{\min\{p, \bar p\}}{p}, \\
      \label{eq:f-1prime-1}
      F_{-1}'(p, 1)
      &= \frac{2 \bar p}{1 - 2 \min\{p, \bar p\}}
      - \frac{\min\{p, \bar p\}}{p} \,. \\
      \intertext{If $p \leq \frac12$, this becomes}
      \label{eq:f-1-1-less}
      F_{-1}(p, 1) &= 1, \\
      \label{eq:f-1prime-1-less}
      F_{-1}'(p, 1) &= \frac1{1 - 2 p}\,.
    \end{align}
  \end{subequations}
\end{lemma}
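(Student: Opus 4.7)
The plan is to reduce everything to the explicit formula $F_{-1}(p,s) = \dfrac{1 - \sqrt{1 - 4p\bar p\, s^2}}{2ps}$ obtained by combining~\eqref{eq:p-walk-back-0} and~\eqref{eq:p-walk-back-1}, and to evaluate it and its $s$-derivative at $s=1$. The only subtlety is that the square root has to be extracted carefully, because its argument at $s=1$ is a perfect square whose root involves an absolute value.

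First I would observe the identity $1 - 4p\bar p = 1 - 4p(1-p) = (1-2p)^2$, so that $\sqrt{1 - 4p\bar p} = \lvert 1 - 2p\rvert = 1 - 2\min\{p,\bar p\}$ (the second equality splits into $p \le \tfrac12$ and $p > \tfrac12$ and checks both). Substituting $s=1$ in the formula for $F_{-1}$ then gives
\begin{equation*}
F_{-1}(p,1) = \frac{1 - (1 - 2\min\{p,\bar p\})}{2p} = \frac{\min\{p,\bar p\}}{p},
\end{equation*}
which is~\eqref{eq:f-1-1}; the specialisation~\eqref{eq:f-1-1-less} is immediate, since $\min\{p,\bar p\} = p$ when $p \le \tfrac12$.

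For the derivative I would apply the quotient rule to $F_{-1}(p,s) = u(s)/v(s)$ with $u(s) = 1 - \sqrt{1 - 4p\bar p\, s^2}$ and $v(s) = 2ps$. A direct computation gives $u'(s) = \dfrac{4p\bar p\, s}{\sqrt{1 - 4p\bar p\, s^2}}$ and $v'(s) = 2p$, so
\begin{equation*}
F_{-1}'(p,s) = \frac{u'(s)}{v(s)} - \frac{u(s)\,v'(s)}{v(s)^2}.
\end{equation*}
Evaluating at $s=1$, the first term becomes $\dfrac{2\bar p}{\lvert 1 - 2p\rvert} = \dfrac{2\bar p}{1 - 2\min\{p,\bar p\}}$, while the second term is exactly $F_{-1}(p,1) = \min\{p,\bar p\}/p$ because $v'(1)/v(1) = 1$. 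This gives~\eqref{eq:f-1prime-1}. For $p \le \tfrac12$ the right-hand side simplifies as
\begin{equation*}
\frac{2\bar p}{1 - 2p} - 1 = \frac{2\bar p - (1 - 2p)}{1 - 2p} = \frac{1}{1 - 2p},
\end{equation*}
establishing~\eqref{eq:f-1prime-1-less}.

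The main thing to watch out for is the sign coming from the square root: writing $\sqrt{(1-2p)^2} = 1-2p$ without absolute value would give the wrong answer for $p > \tfrac12$ (in particular $F_{-1}(p,1)$ would exceed $1$, which is impossible for a probability). Once this is handled via $\lvert 1-2p\rvert = 1 - 2\min\{p,\bar p\}$, both formulas fall out without further difficulty.
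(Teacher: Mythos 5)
Your proof is correct and follows essentially the same route as the paper's: both hinge on the identity $\sqrt{1-4p\bar p}=\lvert 1-2p\rvert=1-2\min\{p,\bar p\}$ and both compute the derivative via the quotient rule written so that the second term is $F_{-1}(p,s)\cdot v'(s)/v(s)$, which collapses to $F_{-1}(p,1)$ at $s=1$. The only cosmetic difference is that you differentiate the combined explicit formula directly, whereas the paper first records $F_0'$ and then divides; the computations are identical.
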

\begin{proof}
  A useful formula for the following calculations is
  \begin{equation}
    \label{eq:sqrt-abs-min}
    \sqrt{1 - 4 p \bar p} = 1 - 2 \min\{p, \bar p\}\,.
  \end{equation}
  To prove it we note that $\sqrt{1 - 4 p \bar p} = \sqrt{1 - 4 p + 4
    p^2} = \sqrt{(1 - 2p)^2} = \abs{1 - 2p}$. If $p \leq \bar p$, then
  $p \leq \frac12$ and therefore $\abs{1 - 2p} = 1 - 2p = 1 - 2
  \min\{p, \bar p\}$. Since $\sqrt{1 - 4 p \bar p}$ is symmetric in
  $p$ and $\bar p$, the same argument is valid for $\bar p \leq p$.
  This proves~\eqref{eq:sqrt-abs-min}.

  With~\eqref{eq:sqrt-abs-min} we can compute $F_{-1}(p, 1)$:
  \begin{align}
    F_0(p, 1) &= 1 - \sqrt{1 - 4 p \bar p}
    = 2 \min\{p, \bar p\}, \\
    F_{-1}(p, 1) &= \frac{F_0(p, 1)}{2p}
    = \frac{\min\{p, \bar p\}}{p}\,.
  \end{align}
  From this we see that if $p \leq \frac12$ then $F_{-1}(p, 1) = 1$.

  For the computation of $F_{-1}'(p, 1)$ we start with the derivative
  of $F_0$.
  \begin{align}
    F_0'(p, s)
    &= - \frac1{2 \sqrt{1 - 4 p \bar p s^2}} (-4 p \bar p \cdot 2 s)
    = \frac{4 p \bar p s}{\sqrt{1 - 4 p \bar p s^2}}\,.
  \end{align}
  Next we note that the formula for the derivative of a quotient can
  be written in the form $\left(\frac{f}{g} \right)' = \frac{f'}{g} -
  \left( \frac{f}{g} \right) \frac{g'}{g}$. This leads to
  \begin{equation}
    F_{-1}'(p, s)
    = \frac{F_0'(p, s)}{2 p s}
    - F_{-1}(p, s) \frac{2p}{2ps}
    = \frac{2 \bar p}{\sqrt{1 - 4 p \bar p s^2}}
    - \frac{F_{-1}(p, s)}{s}\,.
  \end{equation}
  In case of $s = 1$ this becomes
  \begin{equation}
    F_{-1}'(p, 1)
    = \frac{2 \bar p}{1 - 2 \min\{p, \bar p\}}
    - \frac{\min\{p, \bar p\}}{p},
  \end{equation}
  and if $p \leq \frac12$,
  \begin{equation}
    F_{-1}'(p, 1) = \frac{\bar p}{1 - 2 p} - 1
    = \frac{2 - 2 p}{1 - 2 p} - 1
    = \frac1{1 - 2p}\,.
\end{equation}
\end{proof}

Since $A_1$ is an $\eta$-walk starting from $\ell - 1$ and ending
at the time $t$ at which $A_{1,t} = 0$, the generating function for
its end time $s_{A_1}$ is
\begin{equation}
  \label{eq:gen-phase-i}
  G_I(\ell, p)
  = F_{-(\ell - 1)} (\eta, s) = F_{-1}(\eta, s)^{\ell-1}
\end{equation}
The last equality follows from~\eqref{eq:p-walk-back-n}.

The stopping condition for $A_2$ is expressed in
Definition~\ref{def:approximation} in terms of the difference $t -
A_{2,t}$, not of $A_{2,t}$. To speak about paths with such a stopping
condition we now define a function $D \colon \Pi \to \Pi$, which maps
a path $x$ to the path
\begin{equation}
  \label{eq:diffpath}
  D(x) = (s_x, (t - x_t)_{0 \leq t \leq s_x})\,.
\end{equation}
So $D(X)$ and $x$ are two random walks with the same starting point
and the same number of steps. At each step, the $D(x)$ particle moves
two steps to the left if the $x$ particle moves to the right, and it
stays at its place if the $x$ particle moves to the right.

We can then say that the stochastic process $A_2$ is an
$\bar\eta$-walk that starts at $D(A_2)_0 = - \ell + 1$ (which is
because $A_{2,0} = \ell - 1$) and stops at the first time $t$ at which
$D(A_2)_t = s_{A_1} + 2$.

In the next lemma we will introduce a kind of $p$-walk in which the
value of $D$ changes exactly by 2. We can then express $A_2$ as a
sequence of these random walks. This will allow us to find the
generating functions for those $p$-walks in which the value of $D$
changes by a prescribed amount, as in $A_2$.

\begin{lemma}[Hooks]
  A \emph{hook} is a $p$-walk that ends after the first step to the
  left. Let $X$ be a hook. The generating function for its end time
  $s_X$ is
  \begin{equation}
    \label{eq:finite-walk}
    H(p, s) = \frac{\bar p s}{1 - p s}\,.
  \end{equation}
\end{lemma}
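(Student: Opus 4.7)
The plan is a direct computation from the definition, since a hook has an especially simple structure: it is fully described by the number of right-steps it takes before its terminating left-step.

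First I would argue that the event $\{s_X = k\}$ is equivalent to the walker taking $k - 1$ consecutive right-steps followed by a single left-step. By the definition of a $p$-walk and the independence of the individual steps, this has probability $p^{k-1} \bar p$ for every $k \geq 1$, and $\P(s_X = 0) = 0$ since a hook by definition does not terminate at time $0$.

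Next I would plug this into the definition of the generating function and sum the resulting geometric series:
\begin{equation}
H(p, s) = \sum_{k \geq 1} \P(s_X = k)\, s^k = \bar p s \sum_{k \geq 0} (ps)^k = \frac{\bar p s}{1 - ps},
\end{equation}
where the sum converges absolutely for $|ps| < 1$, which covers the range in which we will later evaluate $H$.

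There is no real obstacle here; the only point worth checking explicitly is the boundary case $p = 1$, in which case the walker never steps to the left, so $s_X = \infty$ almost surely and both sides of the formula vanish. A brief remark that $H(p,1) = \bar p/(1-p) = 1$ for $p < 1$ (confirming via \eqref{eq:g(1)} that a hook almost surely terminates) could be included to tie the formula back to the general properties of generating functions established earlier.
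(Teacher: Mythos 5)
Your proposal is correct and follows essentially the same route as the paper: identify that a hook of length $k$ consists of $k-1$ right-steps and one left-step with probability $p^{k-1}\bar p$, then sum the geometric series. The extra remarks on convergence and the degenerate case $p = 1$ are harmless additions not present in the paper's proof.
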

\begin{proof}
  If a hook has a stopping time $t$, it must consist of $t - 1$ steps
  to the right and then one step to the left. The steps to the right
  have probability $p$, and the step to the left has probability $\bar
  p$, therefore the probability for a walk of $t$ steps is $p^{t-1}
  \bar p$. So the generating function for its stopping time is
  $\sum_{t \geq 1} p^{t-1} \bar p s^t = \bar p s \sum_{t \geq 0} p^t
  s^t = \frac{\bar p s}{1 - p s}$.
\end{proof}

\begin{lemma}[Values of $H$]
  \begin{subequations}
    \begin{align}
      \label{eq:h-1}
      H(p, 1) &= 1, \\
      \label{eq:hprime-1}
      H'(p, 1) &= \frac1{\bar p}\,.
    \end{align}
  \end{subequations}
\end{lemma}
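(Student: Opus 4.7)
The plan is to evaluate $H(p,s) = \frac{\bar p s}{1 - p s}$ and its derivative with respect to $s$ directly at $s = 1$, using the definition $\bar p = 1 - p$ to simplify.

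For \eqref{eq:h-1}, I would just substitute $s = 1$ into the closed form, giving $H(p, 1) = \frac{\bar p}{1 - p}$, and then observe that $1 - p = \bar p$ by Definition~\ref{def:inverted-probability}, so the quotient reduces to $1$. This also makes sense probabilistically: since $p < 1$ (otherwise $H$ is not well-defined at $s = 1$), a hook terminates almost surely, so its stopping time is finite with probability $1$, matching \eqref{eq:g(1)}.

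For \eqref{eq:hprime-1}, I would apply the quotient rule to $H$. Writing $f(s) = \bar p s$ and $g(s) = 1 - p s$, we have $f'(s) = \bar p$ and $g'(s) = -p$, so
\begin{equation}
H'(p, s) = \frac{\bar p (1 - p s) - \bar p s \cdot (- p)}{(1 - p s)^2}
= \frac{\bar p - \bar p p s + \bar p p s}{(1 - p s)^2}
= \frac{\bar p}{(1 - p s)^2}.
\end{equation}
Substituting $s = 1$ and using $1 - p = \bar p$ once more gives $H'(p, 1) = \frac{\bar p}{\bar p^2} = \frac{1}{\bar p}$.

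There is no real obstacle here; both computations are one-line simplifications of the rational function. The only thing worth flagging is that the formulas are meaningful only when $p < 1$, so that $H$ is analytic at $s = 1$; this is implicit in the problem since we work with $\eta$ or $\bar\eta$ strictly less than $1$ in the random-walk model. Since the lemma statement gives $H'(p,1)$ as a finite number, this assumption is built in.
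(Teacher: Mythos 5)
Your proof is correct and matches the paper's own argument: the paper also treats \eqref{eq:h-1} as immediate from substitution and derives \eqref{eq:hprime-1} by the same quotient-rule computation yielding $H'(p,s) = \bar p/(1-ps)^2$ before evaluating at $s=1$. Your added remark that $p<1$ is needed for $H$ to be defined at $s=1$ is a sensible (if minor) point the paper leaves implicit.
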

\begin{proof}
  Equation~\eqref{eq:h-1} is clear. For~\eqref{eq:hprime-1} we first
  note that
  \begin{equation}
    \label{eq:hprime}
    H'(p, s)
    = \frac{\bar p (1 - p s) - \bar p s (- p)}{(1 - p s)^2}
    = \frac{\bar p - \bar p p s + \bar p p s}{(1 - p s)^2}
    = \frac{\bar p}{(1 - p s)^2}\,.
  \end{equation}
  Then we can see that $H(p, 1) = \bar p / \bar p^2 = 1 / \bar p$.
\end{proof}

The following two theorems are then needed to construct $G_{II}$ from
the generating functions we have derived so far.
\begin{theorem}[Product of Generating Functions, {\cite[Theorem
    5.1.13]{Grimmett1989}}]
  \label{thm:gf-product}
  Let $X$, $Y \in \N_0^\Omega$ be two independent random variables.
  Then their sum $X + Y$ has the generating function $G_X(s) G_Y(s)$.
  \qed
\end{theorem}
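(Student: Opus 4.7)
The plan is to give the standard two-line proof that the generating function of a sum of independent $\N_0$-valued random variables factors. I would start from the definition $G_{X+Y}(s) = \E(s^{X+Y})$ and exploit the algebraic identity $s^{X+Y} = s^X \cdot s^Y$. Since $X$ and $Y$ are independent, any functions of them---in particular $s^X$ and $s^Y$---are independent random variables, so the expectation of the product factors: $\E(s^X s^Y) = \E(s^X)\E(s^Y) = G_X(s) G_Y(s)$. This is the cleanest route and needs only a one-line invocation of independence.

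If one prefers a more elementary argument that avoids appealing to ``functions of independent random variables are independent'', I would fall back on the convolution formula. Since $X+Y$ takes values in $\N_0$, condition on the value of $X$ to get
\begin{equation*}
  \P(X + Y = n) = \sum_{k=0}^{n} \P(X = k)\, \P(Y = n - k)
  \qquad \text{for all } n \in \N_0,
\end{equation*}
where independence is used to split the joint probability into a product. Substituting this into the definition of $G_{X+Y}(s)$ and re-indexing the double sum via $m = n - k$ yields a Cauchy product that factors as
\begin{equation*}
  G_{X+Y}(s)
  = \sum_{k \geq 0} \P(X = k) s^k \cdot \sum_{m \geq 0} \P(Y = m) s^m
  = G_X(s) G_Y(s).
\end{equation*}

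The only subtle point is convergence: for $s \in [-1, 1]$ all the series involved are absolutely convergent (bounded by $\sum \P = 1$), so rearranging the double sum into a Cauchy product is justified by Fubini for nonnegative terms. There is no real obstacle here; the statement is a classical textbook result, and either of the two routes above suffices. I would present the expectation-based proof as the main line and mention the convolution identity as an alternative derivation.
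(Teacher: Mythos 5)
Your proof is correct. Note, however, that the paper offers no proof of this statement at all---it is quoted verbatim from Grimmett and Stirzaker (Theorem 5.1.13) and marked \qed as a cited result---so there is nothing internal to compare against; both of your routes (the expectation identity $\E(s^{X+Y}) = \E(s^X)\E(s^Y)$ via independence, and the convolution/Cauchy-product computation) are the standard textbook arguments and either would serve as a complete proof.
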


\begin{theorem}[Composition of Generating Functions, {\cite[Theorem
    5.1.15]{Grimmett1989}}]
  \label{thm:gf-composition}
  Let $N$, $X \in \N^\Omega$ be two independent random variables with
  generating functions $G_N$ and $G_X$. Then the composition of $G_N$
  and $G_X$, the function $G_N \circ G_X$, is the generating function
  of the sum
  \begin{equation}
    \label{eq:composition-sum}
    X_1 + X_2 + \dots + X_N,
  \end{equation}
  where $X_1, X_2, \dots$ is a sequence of random variables that are
  independent of each other and of $N$ and all of which have the same
  distribution as $X$. \qed
\end{theorem}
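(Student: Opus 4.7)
The plan is to prove this by direct computation: start from the definition of the generating function of the sum $S = X_1 + X_2 + \cdots + X_N$ and condition on the value of $N$. Writing $G_S(s) = \E(s^S)$ and applying the law of total expectation over $N \in \N$ gives
\begin{equation}
G_S(s) = \sum_{n \geq 1} \P(N = n)\, \E\bigl(s^{X_1 + \cdots + X_N} \mid N = n\bigr).
\end{equation}
On the event $\{N = n\}$ the sum $S$ reduces to $X_1 + \cdots + X_n$, and by the assumed independence of $N$ from the sequence $(X_i)_{i \geq 1}$, conditioning on $N = n$ does not change the law of this finite sum. Hence the conditional expectation equals the unconditional $\E(s^{X_1 + \cdots + X_n})$.

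Because the $X_i$ are mutually independent, an induction on $n$ using Theorem~\ref{thm:gf-product} shows that $\E(s^{X_1 + \cdots + X_n}) = G_{X_1}(s) \cdots G_{X_n}(s)$, and since each $X_i$ shares the distribution of $X$, every factor equals $G_X(s)$. Substituting back yields
\begin{equation}
G_S(s) = \sum_{n \geq 1} \P(N = n)\, G_X(s)^n = \E\bigl(G_X(s)^N\bigr) = G_N(G_X(s)),
\end{equation}
which is the claimed identity.

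The main point that deserves care is the interchange of the sum over $n$ with the expectation, that is, the validity of the law of total expectation applied to the random variable $s^S$. Since $|s^S| \leq 1$ for $s \in [-1, 1]$, dominated convergence legitimises this step throughout the common disc of convergence of $G_N$ and $G_X$, which is precisely where the composition $G_N \circ G_X$ is also defined. The inductive extension of Theorem~\ref{thm:gf-product} from two summands to $n$ is routine once one observes that $X_1 + \cdots + X_{n-1}$ is independent of $X_n$. No further structural obstacle arises; the result is essentially a bookkeeping identity for sums of independent terms indexed by a random count.
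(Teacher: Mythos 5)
Your proof is correct: conditioning on $N$, using the independence of $N$ from the sequence $(X_i)$, applying the product rule for generating functions inductively, and recognising $\sum_{n\geq 1}\P(N=n)\,G_X(s)^n$ as $G_N(G_X(s))$ is exactly the standard argument, and your remark on justifying the interchange via $\lvert s^S\rvert\leq 1$ (or nonnegativity for $s\in[0,1]$) covers the only delicate step. The paper itself gives no proof here --- it simply cites Theorem 5.1.15 of Grimmett and Stirzaker --- and your argument is precisely the one found in that reference, so there is nothing to compare beyond noting that you have correctly supplied the omitted details.
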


In the context of random walks, the variable $X$ of the theorem may
stand for the stopping time of one random walk; then the
sum~\eqref{eq:composition-sum} is the stopping time of a sequence of
$N$ random walks, each one starting where the previous one stopped and
all of the same kind as $X$. We will now use this property to compose
the path of $A_2$ from hooks.

For the next theorem we need a lemma about the $D$ function.
\begin{lemma}[$D$ Function and Hooks]
  \label{thm:hook-decomposition}
  Let $X$ be a $p$-walk.

  Then $D(X)_{s_X} - D(X)_0$ is an even number.

  If the last step of $X$ is to the left, then $X$ is a sequence of
  $(D(X)_{s_X} - D(X)_0) / 2$ hooks.
\end{lemma}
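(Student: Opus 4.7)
The plan is to track how $D(X)_t$ changes under a single step of $X$. From $D(X)_t = t - X_t$ we get $D(X)_{t+1} - D(X)_t = 1 - (X_{t+1} - X_t)$. Since $X$ is a $p$-walk, the increment $X_{t+1} - X_t$ is $+1$ or $-1$, so the $D$-increment is either $0$ (when $X$ steps right) or $+2$ (when $X$ steps left). Writing $L$ for the number of left steps taken by $X$ before time $s_X$, this already gives
\begin{equation*}
  D(X)_{s_X} - D(X)_0 = 2L,
\end{equation*}
which settles the first claim.

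For the second claim, assume the last step of $X$ is to the left, and let $0 < t_1 < t_2 < \cdots < t_L = s_X$ be the times at which $X$ takes its $L$ left steps. I would split the path $X$ into $L$ consecutive pieces, where the $k$-th piece is the restriction of $X$ to the time interval $[t_{k-1}, t_k]$ (with the convention $t_0 = 0$), reindexed to start at time $0$. By the choice of the $t_k$, every step strictly before $t_k$ inside the $k$-th piece is a right step, and the final step at time $t_k$ is a left step. Each piece is therefore itself a $p$-walk whose stopping time coincides with the first left step, that is, a hook. Since there are exactly $L = (D(X)_{s_X} - D(X)_0)/2$ of them and their concatenation reconstitutes $X$, this establishes the decomposition.

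I do not expect a substantive obstacle here; the only care required is bookkeeping to check that the subwalks defined above really match the definition of a hook, namely that the first left step occurs exactly at the stopping time of the subwalk. Everything else follows directly from the elementary observation that $D$ changes by $0$ or $2$ at each step.
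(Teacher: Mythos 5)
Your proof is correct and follows essentially the same route as the paper: the even-ness comes from observing that each step changes $D(X)_t$ by $0$ or $2$, and the hook decomposition is obtained by cutting the path at its left steps (the paper phrases this as an induction peeling off the final hook, but it is the same decomposition). No gaps.
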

\begin{proof}
  We have either $X_{t+1} = X_t + 1$ and $D(X)_{t+1} = D(X)_t$ or
  $X_{t+1} = X_t - 1$ and $D(X)_{t+1} = D(X)_t + 2$, which proves the
  first assertion.

  If the lasts step of $X$ is to the left, then there is a hook in $X$
  that consists of the last step of $X$ and all the steps to the right
  (possibly zero) immediately before it. The part of $X$ before that
  hook is either empty or ends with a step to the left. So $X$ is by
  induction a sequence of hooks.

  The formula for the number of hooks is true because every hook has
  one step to the left, which contributes 2 to the sum $D(X)_{s_X} -
  D(X)_0$.
\end{proof}

\begin{theorem}[Classification Time]
  \label{thm:reconition-time}
  The generating function for the approximated classification time for an
  initial configuration $\mathcal{B}_n(\ell)$ is
  \begin{equation}
    \label{eq:gen-phase-ii}
    \begin{aligned}[b]
      G_{II}(\ell, s)
      &= F_{-1} \left(\eta, \sqrt{H(\bar\eta, s)}
      \right)^{\ell-1}
      \sqrt{H(\bar\eta, s)}^{\ell+1}\,.
    \end{aligned}
  \end{equation}
\end{theorem}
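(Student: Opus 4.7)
The plan is to realise $s_{A_2}$ as a random sum of hook lengths and then apply the composition theorem for generating functions (Theorem~\ref{thm:gf-composition}), with $H(\bar\eta,s)$ as the inner function. The key step is to reformulate the stopping rule of $A_2$ from Definition~\ref{def:approximation} in terms of its $D$-transform so that Lemma~\ref{thm:hook-decomposition} applies directly.

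First I would translate the stopping rule. Since $A_{2,0}=\ell-1$, we have $D(A_2)_0 = -(\ell-1)$, so $A_2$ stops at the first $t$ with $D(A_2)_t = s_{A_1}+2$; the total increase of $D$ along $A_2$ is therefore $s_{A_1}+\ell+1$. Because an $\eta$-walk from $\ell-1$ reaches $0$ only after a number of steps of the same parity as $\ell-1$, this total is always even. Since $D$ increases only on left-steps, $A_2$ must terminate on a left-step, and Lemma~\ref{thm:hook-decomposition} then decomposes $A_2$ into
\begin{equation*}
  N \;=\; \frac{s_{A_1}+\ell+1}{2}
\end{equation*}
consecutive hooks of an $\bar\eta$-walk, whose lengths $Y_1, Y_2, \ldots$ are i.i.d.\ with generating function $H(\bar\eta,s)$.

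Next I would compute $G_N$ and apply the composition theorem. Realising the step sequence of $A_2$ as an infinite $\bar\eta$-walk, independent of $A_1$, makes the hook lengths independent of $s_{A_1}$, hence of $N$. A direct computation using~\eqref{eq:gen-phase-i} gives
\begin{equation*}
  G_N(u) \;=\; \E(u^{N}) \;=\; u^{(\ell+1)/2}\,\E\bigl((\sqrt{u})^{s_{A_1}}\bigr) \;=\; u^{(\ell+1)/2}\,F_{-1}(\eta,\sqrt{u})^{\ell-1}.
\end{equation*}
Applying Theorem~\ref{thm:gf-composition} to $s_{A_2} = Y_1+\dots+Y_N$ yields $G_{II}(\ell,s) = G_N\bigl(H(\bar\eta,s)\bigr)$, which, after substitution, is exactly~\eqref{eq:gen-phase-ii}.

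The main obstacle is bookkeeping rather than mathematics: the parity argument for $s_{A_1}+\ell+1$ is essential, since otherwise $N$ would fail to be an integer and the hook decomposition would not make sense; and the independence hypothesis of Theorem~\ref{thm:gf-composition} must be secured by constructing $A_1$ and an infinite $\bar\eta$-walk on a common sample space as independent processes, so that $N$, a function of $s_{A_1}$ alone, is independent of the hook lengths, which are functions of the $\bar\eta$-walk.
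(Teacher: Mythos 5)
Your proposal is correct and follows essentially the same route as the paper's proof: decompose $A_2$ into $N = (s_{A_1}+\ell+1)/2$ hooks via the $D$-transform and Lemma~\ref{thm:hook-decomposition}, obtain the generating function of $N$ from $G_I(\ell,\sqrt{s})\,\sqrt{s}^{\,\ell+1}$, and compose with $H(\bar\eta,s)$ using Theorem~\ref{thm:gf-composition}. Your explicit parity argument for $s_{A_1}+\ell+1$ and your remark on securing the independence hypothesis are in fact slightly more careful than the paper's treatment of those points.
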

\begin{proof}
  We first establish the number of hooks in $A_2$. First we note that
  the last step of $A_2$ is always to the left. Therefore $A_2$
  consists completely of hooks. We have seen before that
  $D(A_2)_{s_{A_2}} = s_{A_1} + 2$ and that $D(A_2)_0 = -\ell + 1$.
  Therefore, by Lemma~\ref{thm:hook-decomposition}, the number of
  hooks in $A_2$ is
  \begin{equation}
    \label{eq:num-hooks}
    N = \frac{s_{A_1} + \ell + 1}{2}\,.
  \end{equation}

  Next we find a generating function for $2N$: The generating function
  for $s_{A_1}$ is $G_I(\ell, s) = F_{-1}(\eta, s)^{\ell-1}$ (cf.
  \eqref{eq:gen-phase-i}), and that for $\ell + 1$ is $s^{\ell +1}$.
  Therefore, by Theorem~\ref{thm:gf-product}, the generating function
  for $2N$ is $G_{I}(\ell, s) s^{\ell + 1}$.

  This function must be an even function of the form $\sum_{k \geq 0}
  a_k s^{2k}$, because $2N$ can have only even values. It is therefore
  possible to substitute $\sqrt{s}$ for $s$ into the generating
  function for $2N$ and get a generating function for $N$, which has
  the form $\sum_{k \geq 0} a_k s^k$.

  The generating function for $N$ is then the function $G_{I}(\ell,
  \sqrt{s}) \sqrt{s}^{\ell + 1} = F_{-1}(\eta, \sqrt{s})^{\ell-1}
  \sqrt{s}^{\ell + 1}$

  To get a generating function for the sum of the stopping times for
  $N$ hooks we substitute $H(\bar\eta, s)$ for $s$ into that
  function and the result is~\eqref{eq:gen-phase-ii}, the generating
  function for $s_{A_2}$.
\end{proof}

\begin{theorem}[Classification Quality]
  If $\eta \leq \frac12$, then $G_{II}(1) = 1$.
\end{theorem}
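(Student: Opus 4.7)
The plan is to evaluate the expression for $G_{II}(\ell, s)$ given in Theorem~\ref{thm:reconition-time} directly at $s = 1$ and show that, under the hypothesis $\eta \leq \tfrac12$, every factor collapses to $1$. I will rely only on the closed-form values of $H$ and $F_{-1}$ that were already established in the preceding lemmas, so no new random-walk computation is required.

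Concretely, I would proceed in three short steps. First, invoke the identity $H(\bar\eta, 1) = 1$ from equation~\eqref{eq:h-1}; this immediately gives $\sqrt{H(\bar\eta, 1)} = 1$, so the outer factor $\sqrt{H(\bar\eta, s)}^{\ell+1}$ equals $1$ at $s = 1$. Second, use this to reduce the inner argument of $F_{-1}$: the quantity $F_{-1}(\eta, \sqrt{H(\bar\eta, 1)})$ becomes $F_{-1}(\eta, 1)$. Third, apply equation~\eqref{eq:f-1-1-less}, which asserts $F_{-1}(p, 1) = 1$ whenever $p \leq \tfrac12$; the hypothesis $\eta \leq \tfrac12$ is precisely what is needed to use this form (rather than the general $F_{-1}(p,1) = \min\{p,\bar p\}/p$ from~\eqref{eq:f-1-1}). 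Combining these observations, $G_{II}(\ell, 1) = 1^{\ell - 1} \cdot 1^{\ell + 1} = 1$.

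There is no real obstacle here: the result is essentially a substitution once the earlier lemmas are in place. The only thing worth highlighting in the write-up is why the hypothesis $\eta \leq \tfrac12$ is exactly the right one, namely that this is the threshold beyond which the random walk $A_1$ used to define Phase~I fails to return to $0$ with probability one, so that $F_{-1}(\eta, 1)$ drops below $1$. Under $\eta \leq \tfrac12$ the drift of $A_1$ is toward $0$ (and the drift of the corresponding hooks of $A_2$ is toward termination), which is the probabilistic reason the generating function sums to $1$ and which explains, interpreted via~\eqref{eq:g(1)}, that classification eventually happens almost surely in the approximation.
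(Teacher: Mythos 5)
Your proof is correct and follows essentially the same route as the paper: substitute $s=1$, use $H(\bar\eta,1)=1$ from~\eqref{eq:h-1} to collapse the outer factor and reduce the inner argument to $F_{-1}(\eta,1)$, then apply $F_{-1}(\eta,1)=1$ for $\eta\leq\frac12$ from~\eqref{eq:f-1-1-less}. (Your citation of~\eqref{eq:f-1-1-less} is in fact more accurate than the paper's, which points to the derivative formula by mistake.)
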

\begin{proof}
  We have $H(\bar\eta, 1) = 1$ from~\eqref{eq:h-1}. Therefore the
  generating function for the second phase is $G_{II}(\ell, 1) =
  F_{-1}(\eta, 1)^{\ell - 1}$. Since $\eta < \frac12$, we have
  $F_{-1}(\eta, 1) = 1$ from~\eqref{eq:f-1prime-1-less} and
  therefore $G_{II}(\ell, 1) = 1$.
\end{proof}
Therefore for small $\eta$, blocks $\mathcal{B}(n, \ell)$ are
always classified correctly.

\begin{theorem}[Classification Time]
  Let $\ell \geq 2$. If $\eta \approx 0$, the expected value for
  the classification time for blocks is
  \begin{equation}
    \label{eq:block-time}
    G_{II}'(\ell, 1)
    =  \frac{\ell - \eta\ell - \eta}{\eta - 2\eta^2}\,.
  \end{equation}
\end{theorem}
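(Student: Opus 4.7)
The plan is to compute $G_{II}'(\ell,1)$ directly from the product structure of $G_{II}$, using the chain rule together with the values of $F_{-1}$, $F_{-1}'$, $H$ and $H'$ at $s=1$ that have already been established. Concretely, I would introduce the abbreviation $u(s) = \sqrt{H(\bar\eta,s)}$, so that
$$G_{II}(\ell,s) = F_{-1}(\eta, u(s))^{\ell-1}\, u(s)^{\ell+1}.$$

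First I would evaluate $u$ and $u'$ at $s=1$: from $H(\bar\eta,1)=1$ and $H'(\bar\eta,1)=1/\eta$ (applying~\eqref{eq:hprime-1} with $p=\bar\eta$, so $\bar p = \eta$), the chain rule gives $u(1)=1$ and $u'(1)=H'(\bar\eta,1)/\bigl(2\sqrt{H(\bar\eta,1)}\bigr)=1/(2\eta)$. Here the hypothesis $\eta\approx 0$ enters only through $\eta\le\tfrac12$, which will be used in the next step.

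Next I would apply the product and chain rules to $G_{II}$. Setting $A(s)=F_{-1}(\eta,u(s))^{\ell-1}$ and $B(s)=u(s)^{\ell+1}$, one obtains
$$A'(s) = (\ell-1)\,F_{-1}(\eta,u(s))^{\ell-2}\,F_{-1}'(\eta,u(s))\,u'(s), \qquad B'(s)=(\ell+1)\,u(s)^\ell\,u'(s).$$
Evaluating at $s=1$ and using $\eta\le\tfrac12$ so that~\eqref{eq:f-1-1-less} and \eqref{eq:f-1prime-1-less} give $F_{-1}(\eta,1)=1$ and $F_{-1}'(\eta,1)=1/(1-2\eta)$, all powers of $A(1)$, $B(1)$ and $u(1)$ collapse to $1$, leaving
$$G_{II}'(\ell,1) = (\ell-1)\cdot\frac{1}{1-2\eta}\cdot\frac{1}{2\eta} + (\ell+1)\cdot\frac{1}{2\eta}.$$

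Finally I would clear denominators by putting both terms over $2\eta(1-2\eta)$. The numerator becomes $(\ell-1)+(\ell+1)(1-2\eta)=2\ell-2\eta\ell-2\eta$, and the factor $2$ cancels against $2\eta(1-2\eta)=2(\eta-2\eta^2)$, yielding $(\ell-\eta\ell-\eta)/(\eta-2\eta^2)$ as claimed. There is no genuine obstacle here beyond careful bookkeeping of the chain rule through the square root; the only conceptual point worth flagging is that the hypothesis ``$\eta\approx 0$'' in the statement is used exclusively to ensure $\eta\le\tfrac12$, so that the simpler branch of $F_{-1}(\eta,1)$ and $F_{-1}'(\eta,1)$ applies, and the resulting formula is in fact an exact equality throughout that range.
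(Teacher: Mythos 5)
Your proposal is correct and follows essentially the same route as the paper: both differentiate $G_{II}(\ell,s)=F_{-1}(\eta,\sqrt{H(\bar\eta,s)})^{\ell-1}\sqrt{H(\bar\eta,s)}^{\ell+1}$ at $s=1$ by the chain and product rules, substitute $F_{-1}(\eta,1)=1$, $F_{-1}'(\eta,1)=1/(1-2\eta)$, $H(\bar\eta,1)=1$ and $H'(\bar\eta,1)=1/\eta$, and simplify to $(\ell-\eta\ell-\eta)/(\eta-2\eta^2)$; the paper merely packages the same computation by first differentiating the auxiliary function $f(\ell,s)=F_{-1}(\eta,s)^{\ell-1}s^{\ell+1}$ and then composing with $\sqrt{H}$. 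Your closing remark that the hypothesis $\eta\approx 0$ is used only to select the $\eta\le\frac12$ branch of $F_{-1}$ and $F_{-1}'$ (and, one should add, to guarantee $G_{II}(\ell,1)=1$ so that the derivative is indeed the expectation) is also consistent with the paper.
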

\begin{proof}
  To compute this, we first define
  \begin{equation}
    f(\ell, s) = F_{-1}(\eta, s)^{\ell-1} s^{\ell+1},
  \end{equation}
  such that $G_{II}(\ell, s) = f(\ell, \sqrt{H(\bar\eta, s)})$.
  The derivative of $f$ is
  \begin{align}
    f'(\ell, s) &= (\ell - 1) F_{-1}(\ell, s)^{\ell - 2} F_{-1}'(\ell,
    s) s^{\ell+1}
    + (\ell + 1) F_{-1}(\ell, s)^{\ell - 1} s^\ell \notag \\
    &= \left((\ell - 1) F_{-1}'(\ell, s) s + (\ell + 1) F_{-1}(\ell,
      s) \right)
    f(\ell - 1, s) \,. \\
    \intertext{We know from~\eqref{eq:f-1-1-less} that
      $F_{-1}(\eta, 1) = 1$ if $\eta \leq \frac12$, therefore
      $f(\ell, 1) = 1$ and} 
    f'(\ell, 1) &= (\ell - 1) F_{-1}'(\ell, 1) + \ell + 1.
  \end{align}
  Now we can compute $G_{II}'$, first in general and then for $s = 1$.
  \begin{equation}
    \begin{aligned}[b]
      G_{II}'(\ell, s)
      &= f'(\ell, \sqrt{H(\bar\eta, s)}) 
      \frac{1}{2 \sqrt{H(\bar\eta, s)}} H'(\bar\eta, s)
    \end{aligned}
  \end{equation}
  Since we know from~\eqref{eq:hprime-1} that $H(\bar\eta, 1) = 1$,
  this term simplifies for $s = 1$ to
  \begin{equation}
    \begin{aligned}[b]
      G_{II}'(\ell, 1)
      &= \left((\ell - 1)F_{-1}'(\eta, 1) + \ell + 1 \right)
      \frac12 H'(\bar\eta, 1) \\
      &= \left((\ell - 1)\frac{1}{1 - 2 \eta}
        + \ell + 1 \right) \frac1{2\eta} \\
      &= \frac{2\ell - 2\eta\ell - 2\eta}{1-2\eta} \cdot \frac1{2\eta},
    \end{aligned}
  \end{equation}
  which proves~\eqref{eq:block-time}.
\end{proof}

\subsection{Empirical Results}
\label{sec:empirical-results}

With so many simplifications in its derivation one may doubt whether
the approximation~\eqref{eq:block-time} has any validity at all.

In the derivation of the simplified model we had argued that it would
be relatively accurate if the classification parameter $\eta$ and the
initial density $\frac{\ell}{n}$ were small. But we did not know the
ranges of $\eta$ and $\ell$ for which the approximation is good. This
question had to be resolved instead by experiment.

\paragraph{Dependence on $\eta$}

\begin{figure}[ht]
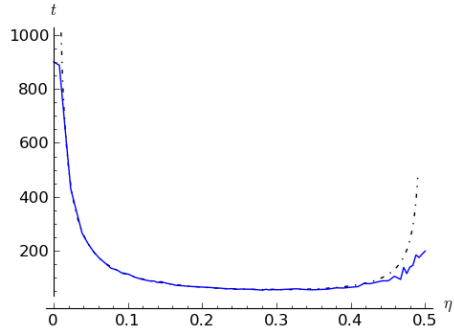

  \incgraphc{approx-epsilon}
  \caption{Classification time for $\mathcal{B}(101, 10)$ depending on
    $\eta$, empirical and as approximation. The dashed line is the
    approximated classification time~\eqref{eq:block-time} and the
    straight line represents the measured values. For every value of
    $\eta$, the simulation was run 200 times over 900 time steps
    and the resulting times then averaged. Failed classifications were
    counted as if requiring 900 time steps.}
  \label{fig:approx-eta}
\end{figure}
Figure~\ref{fig:approx-eta} compares the
approximation~\eqref{eq:block-time} of the classification time with
the empirical values for different $\eta$. The approximated
classification time was good for $0.01 < \eta < 0.4$, and the actual
classification time was always less than or equal to the predicted
time.

A curve similar to that in Figure \ref{fig:approx-eta} was found
empirically by Fatès \cite[Figure 4]{Fates2013} for arbitrary initial
configurations. Fatès also found that the classification time has its
minimum at $\eta \approx 0.3$. We can see the same minimum in
Figure~\ref{fig:approx-eta}.

There is also theoretical justification: For a given $\ell$, the
recognition time function in \eqref{eq:block-time} takes its minimum
at $\eta = \frac{2 \ell - \sqrt{2 \ell (\ell - 1)}}{2 (l + 1)}$. If
$\ell$ is large, this expression simplifies to $1 - 1 /\sqrt2 \approx
0.294$, close to the empirical values.

\paragraph{Initial Density} The connection between the initial density
and the quality of the approximation was investigated by two
experiments.

\begin{figure}[ht]
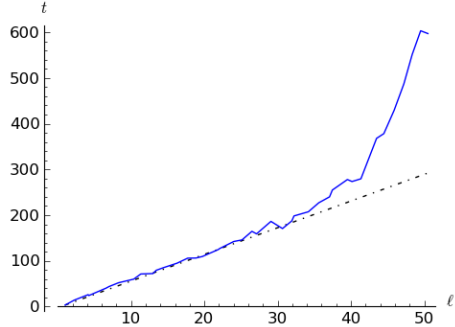

  \incgraphc{approx-ell}
  \caption{Classification time for $\mathcal{B}(101, \ell)$ depending
    on $\ell$, for $\eta = 0.3$. The other parameters are the same
    as in Figure~\ref{fig:approx-eta}.}
  \label{fig:approx-ell}
\end{figure}
In the first experiment (Figure~\ref{fig:approx-ell}), the ring size
$n$ was kept fixed and the length $\ell$ of the initial block varied.
The approximated classification time was reasonably good for $\ell$
less than about 40, corresponding to an initial density of 0.4. For
higher values of $\ell$ the actual classification time was larger than
the time predicted by the model.

\begin{figure}[ht]
  \incgraphc{approx-n-ell}
  \caption{Classification time for $\mathcal{B}(n, 10)$ depending on
    $n$, for $\eta = 0.3$. The other parameters are the same as in
    Figure~\ref{fig:approx-eta}.}
  \label{fig:approx-n-ell}
\end{figure}
In the second experiment (Figure~\ref{fig:approx-n-ell}), the length
of the initial block was kept fixed at a value of 10 and the ring size
$n$ varied. Here the quality of the approximation was good for $n$
greater than approximately 40, corresponding to a density of 0.25. For
$n < 40$, the actual classification time was always larger than the
predicted time, mostly by a large amount.

\paragraph{Ring Size}

\begin{figure}[ht]
  \incgraphc{approx-n-density}
  \caption{Classification time for $\mathcal{B}(n, \lfloor 0.2 n
    \rfloor)$ depending on $n$, for $\eta = 0.3$. The other
    parameters are the same as in Figure~\ref{fig:approx-eta}.}
  \label{fig:approx-n-density}
\end{figure}
A final experiment shows the influence of the ring size $n$ on the
quality of the approximation. The ring size does not even appear in
the approximated model, and we can ask whether that omission was
justified.

The results of the experiment are shown in
Figure~\ref{fig:approx-n-density}. Here the ring size varied and the
initial density was always 0.2. There was no visible influence of $n$
on the quality of the approximation in the data.

\section{Discussion of the Results}

\subsection{Experimental Data}

The most surprising result is certainly the good quality of the
approximation in Figure~\ref{fig:approx-eta}. It seems that the
assumption that the non-crossing path pairs are the most important
contributors to the behaviour of $\vec B^\Z$ is justified by the data,
even for relatively large $\eta$.

On the other hand, the omission of the ``early wraparound'' from the
analysis, i.\,e. the assumption that the right end of the 01-sequence
never interacts with the left end of the 1-block during Phase~I, was
not in the same measure justified by the data. Only for initial
densities less than 0.4 or even 0.25 (depending on whether one trusts
Figure~\ref{fig:approx-ell} or~\ref{fig:approx-n-ell} more), the
approximation is good. There must be a quite large space left at the
right side of the initial $\mathcal{B}(n, \ell)$ to make sure that no
significant wraparound occurs.

From Figure~\ref{fig:approx-n-density} we finally derive a different
kind of result: It shows that already for relatively modest ring
sizes, like $n \geq 50$, the actual value of $n$ is no longer
important. In cellular automata with transition rule $\phi_\eta$, the
cells then play the role of atoms, or of cells in biology, and are so
small that their number becomes unimportant.

\subsection{Conclusion}

We see that already this simplified model captures significant
properties of the traffic-majority rule. The derivation of the results
is however still quite complex, even for the simplified model. In
future, this kind of derivation could certainly be streamlined.

A puzzling fact is that the model, developed for the the evolution of
a single block, also describes the behaviour of arbitrary initial
configurations reasonably well. Apparently the traffic-majority rule
behaves as if every initial configuration were a sequence of blocks
that evolved independently of each other. A theoretical justification
of this idee would considerably simplify the analysis of this and
other cellular automata.

\paragraph{Acknowledgements} This paper was originally created as a
part of an university module under the guidance of Tim Swift. Many
thanks to him, especially for the advice to restrict the analysis of
the traffic-majority rule to the behaviour of blocks. I also want to
thank Nazim Fatès for reading a late draft and giving helpful hints.

\bibliography{../references}

\end{document}